\theoremstyle{plain}
   \newtheorem{theorem}{Theorem}[section]
   \newtheorem{proposition}[theorem]{Proposition}
   \newtheorem{lemma}[theorem]{Lemma}
   \newtheorem{corollary}[theorem]{Corollary}
   \theoremstyle{definition}
   \newtheorem{definition}[theorem]{Definition}
   \newtheorem{example}[theorem]{Example}
   \theoremstyle{remark}
   \newtheorem{remark}[theorem]{Remark}
\newcommand{\Chains}{{\mathcal C}}
\newcommand{\Cycles}{{\mathbf K}}
\newcommand{\ZZ}{{\mathbb Z}}
\newcommand{\CP}{{\mathbb{CP}}}
\newcommand{\im}{{\rm{im}}}
\newcommand{\colim}[1]{{\operatorname{\underset\longrightarrow{lim}}_k}}
\newcommand{\SymmPower}[2]{{{\rm Sym}^{#1}(#2)}}
\newcommand{\ReducedSymm}[2]{{\overline{{\SymmPower {#1}{#2}}}}}
\newcommand{\SymmGroup}[1]{{\mathfrak{S}_{#1}}}
\newcommand{\FundGroup}{{\pi}}
\begin{document}
\title[curvature of vortex moduli spaces]
{On the curvature of vortex moduli spaces}
\author{Marcel B\"okstedt}
\address{Center for the Topology and Quantization of Moduli Spaces, Institut for Matematiske Fag, Aarhus Universitet, Ny Munkegade 118, 8000 \AA rhus C, Denmark}
\email{marcel@imf.au.dk}
\author{Nuno M. Rom\~ao}
\address{Center for the Topology and Quantization of Moduli Spaces, Institut for Matematiske Fag, Aarhus Universitet, Ny Munkegade 118, 8000 \AA rhus C, Denmark; {\upshape and}
Instytut Matematyki, Universytet Jagiello\'nski, ul.\ \L ojasiewicza 6, 30-348 Krak\'ow, Poland}
\email{nromao@imf.au.dk}
\date{\today}
\begin{abstract} We use algebraic topology to investigate local curvature properties of the moduli spaces of gauged vortices on a closed Riemann surface. After computing the homotopy type of the universal cover of the moduli spaces (which are symmetric products of the surface), we
prove that, for genus $g>1$, the holomorphic bisectional curvature of the vortex metrics cannot always be nonnegative in the multivortex case, and this property extends to all K\"ahler metrics on certain symmetric products. Our result rules out an established and natural conjecture on the geometry of the moduli spaces.

\end{abstract}
\maketitle

\section{Introduction}

Gauged vortices~\cite{JafTauVM, ManSut} are of interest as static, stable configurations in various classical gauge field theories with topological solitons. In their simplest
realisation, they are defined as solutions $({\rm d}_A, \phi)$ to the vortex equations
\begin{eqnarray}
&\bar\partial_{A} \phi  = 0, & \label{vort1} \\
&\ast F_{A} + \mu \circ \phi =0 & \label{vort2}
\end{eqnarray}
on a Hermitian line bundle $\mathcal{L} \rightarrow \Sigma$, where $\Sigma$ is a Riemann surface with a 
K\"ahler metric represented by the Hodge operator $\ast$. The ``Higgs field''
$\phi$ is a smooth section of $\mathcal{L}\rightarrow \Sigma$ and $\bar\partial_{A}$ is the $(0,1)$-part of a unitary connection ${\rm d}_A$
with curvature $F_A$. The second equation depends on the choice of a global moment map
\[
\mu: {\mathcal L} \rightarrow \mathfrak{u}(1)^{*}\cong {\rm i}\, \mathbb{R}
\]
for the Hamiltonian action of the structure group on the fibres (each equipped with the standard, $\rm U$(1)-invariant symplectic form),
which we fix to be, in a unitary trivialisation,
\begin{equation}\label{momentm}
w \mapsto \mu(w)=\frac{1}{2{\rm i}}(||w||^2-\tau),
\end{equation}
where $\tau \in \mathbb{R}$. Both vortex equations (\ref{vort1}) and (\ref{vort2}) are invariant under the group
of (unitary) gauge transformations 
$({\rm d}_A,\phi) \mapsto ({\rm d}_A + u^{-1}{\rm d}u,u\phi)$, 
$u \in {\rm Aut}_\Sigma(\mathcal{L})\cong C^{\infty}(\Sigma,{\rm U}(1))$, and one is usually interested in solutions up to this action.
For most of the discussion in this paper, we shall consider $\Sigma$ to be a connected,
closed surface of genus $g\ge 1$, and sometimes we will denote it by $\Sigma_g$ for emphasis.
Although our focus will be on line bundles,
the vortex equations above make sense in much broader contexts, and
this has been used to compute and generalise the Gromov--Witten
invariants~\cite{BerDasWen,CieGaiMunSal}.

The positive integer $k:=\frac{1}{2 \pi{\rm i}}\int_{\Sigma} F_A$ expresses the first Chern
class $c_1(L)\in H^2(\Sigma; \mathbb{Z}) \cong \mathbb{Z}$ in units of the fundamental class, and
it can be thought of as a quantised magnetic charge or vortex number. So $k$ is the
degree of the effective divisor associated to the holomorphic section $\phi$,
whose zeroes (counting multiplicities) correspond to locations of individual vortex cores on the surface.
It follows directly from (\ref{vort2}) that a metric on $\Sigma$ can only
support $k$ vortices provided $\tau {\rm Vol}(\Sigma) \ge 4 \pi k$, independently of the choice of Hermitian structure on the line bundle. (We use ${\rm Vol}(\Sigma)$ to denote the total area of the surface.)
Conversely, Bradlow~\cite{BraVHLB} and Garc\'\i a-Prada~\cite{GarVRS} showed that, if the strict inequality 
\begin{equation} \label{Brad}
\tau {\rm Vol}(\Sigma) > 4 \pi k
\end{equation}
is satisfied,
one can construct a vortex solution from any effective divisor of degree $k$.
Thus the moduli space $\mathcal{M}_{k,g}$ of $k$-vortex sulutions on $\Sigma=\Sigma_g$ up to gauge transformations
is the symmetric product
\begin{equation}\label{modspaces}
\mathcal{M}_{k,g} \cong {\rm Sym}^k(\Sigma_g):=(\Sigma_g)^k / \mathfrak{S}_k.
\end{equation}
(Here $\mathfrak{S}_k$ denotes the symmetric group on $k$ letters, acting on the Cartesian product
$(\Sigma_g)^k$ by permuting the factors.)
All these spaces are smooth manifolds,  equipped with a natural complex structure induced from the one
on $\Sigma_g$. We will always assume the parameter $\tau$
to be fixed and to satisfy $\tau > \frac{4 \pi k}{{\rm Vol}(\Sigma)}$, whenever we refer to a moduli space of vortices.

In gauge theory, the result (\ref{modspaces}) 
already has interesting consequences at the purely topological level
(see~\cite{DonCMT} for an application); but it turns out that the moduli spaces of vortices have
a richer structure, as they also support natural Riemannian metrics.
On the tangent space to a vortex represented by $({\rm d}_A,\phi)$, one can define this metric pointwise as the restriction of the $L^2$-metric to solutions $(\dot A,\dot \phi)$ of the linearised vortex equations about $({\rm d}_A,\phi)$ orthogonal to the orbit of the gauge group. One can show that the natural complex structure is an isometry of this metric, and the two define a K\"ahler structure~\cite{SamVS}. Alternatively, one can
understand these K\"ahler structures in terms of the symplectic (or K\"ahler) reduction of
the $L^2$-metric on the affine space of all fields, with respect to the Hamiltonian action
of the group of gauge transformations~\cite{GarVRS}.

The geometry of the vortex metrics has
been a focus of interest, since it has a whole range of consequences to the
physical behaviour of vortices, which is very hard to study otherwise. For instance,
the geodesic flow associated to these metrics gives a good approximation
to slow scattering in the abelian Higgs model~\cite{StuAHM}, and the corresponding
symplectic form is relevant to construct other dynamical systems with potentially interesting
applications~\cite{ManFOVD, ColTonDCSV}. Natural extensions of these basic mechanical models might include fermions, or extra potential terms that are covariant with respect to the moduli space metrics, in order to model particular phenomenology. Such extra terms will typically be nonlinear and involve the curvature of the Levi-Civita connection associated to the metric on the moduli space~\cite{MosShiQMMS}. More generally, curvature properties have implications for spectral geometry, which becomes relevant when one goes on to discuss supersymmetric quantum mechanics, or to construct improved approximations to the dynamics in the underlying classical field theories.

Computing the metrics on the moduli spaces is still a very difficult task. Samols~\cite{SamVS} was
able to write down a local formula in terms of the moduli, using a remarkable localisation argument in the spirit of previous work by Strachan~\cite{StrLVS}. From a version of Samols' formula for closed surfaces, 
one has been able to extract some global information --- for example, the symplectic volume~\cite{ManNasVVMS} of
the moduli spaces, their total scalar curvature~\cite{BapL2M}, and other integrals~\cite{RomGVB} which
provide information about vortex interactions~\cite{EasRomCFHM}. All these quantities are determined by
the K\"ahler class; in contrast, qualitative information
about the local geometry, such as more detailed curvature properties, has proven more difficult to obtain.
In~\cite{ManSpeAIV}, Manton and Speight described two alternative schemes of studying vortex metrics in the regime of large separation among $k \ge 2$ vortex
cores in the euclidean plane $\Sigma = \mathbb{C}$. These yield an asymptotic
K\"ahler metric whose coefficients can be written in terms of Bessel functions,
and from it one calculates the corresponding Ricci tensor.  This can be seen to be positive semi-definite (at every point), with two-dimensional kernel consisting of the directions associated to rigid translations, 
and it leads to a scalar curvature $Scal$ which is strictly positive everywhere.

In this paper, we address curvature properties of the moduli
spaces that are consistent with the topological consequences of the result~(\ref{modspaces}), and which are in a sense stronger than the sign of the Ricci curvature. The sectional curvature tensor carries all the curvature information
about a Riemannian manifold, but in complex geometry it is natural to restrict it to
tangent 2-planes that are invariant under the complex structure $J$, and this is
what is called the holomorphic sectional curvature. Particularly suitable for 
considerations relating topology to curvature in K\"ahler geometry is the holomorphic bisectional curvature introduced by Goldberg and Kobayashi in~\cite{GolKobHBC},
which carries more information than the holomorphic sectional curvature but less information than the sectional curvature. 
Following the conventions in \cite{MokMRT}, it is defined for $J$-invariant 2-planes $\sigma, \sigma'$ in the tangent space at a point by
\begin{equation} \label{biseccurv}
B(\sigma,\sigma'):=\frac{R(X,JX,X',JX')}{\langle X,X\rangle \langle X',X' \rangle },
\end{equation}
where $X,X'$ are nonzero real vectors contained in $\sigma,\sigma'$, respectively, and $\langle \cdot,\cdot \rangle$ denotes the K\"ahler metric. (Clearly, this definition is independent of the real vectors chosen.) With respect to a basis of normal holomorphic vector fields containing vectors $\xi, \xi'$ along $\sigma, \sigma'$, (\ref{biseccurv}) can also be described as the component $R_{\xi \bar\xi\xi' \bar\xi'}$ of the Riemann tensor in holomorphic coordinates.
It is easy to see~\cite{GolKobHBC} that positivity (or negativity) of the holomorphic bisectional curvature
implies positivity (or negativity) of both the Ricci tensor and the holomorphic sectional curvature, the
latter being simply given by $H(\sigma)=B(\sigma,\sigma)$.

We say that a K\"ahler metric has nonnegative holomorphic bisectional
curvature if $B$ is positive or zero at every point, for all choices of $J$-invariant 2-planes.
Our main result is the following:

\begin{theorem} \label{THMholbiscur}
If $g > 1$ and $1 \leq k\le2g-2$, no K\"ahler metric on ${\rm Sym}^k (\Sigma_g)$ can have 
nonnegative holomorphic bisectional curvature.
\end{theorem}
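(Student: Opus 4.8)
The plan is to pit the topology of ${\rm Sym}^k(\Sigma_g)$, and in particular the homotopy type of its universal cover, against Mok's uniformisation theorem \cite{MokMRT} for compact Kähler manifolds of nonnegative holomorphic bisectional curvature. Suppose, for a contradiction, that ${\rm Sym}^k(\Sigma_g)$ carries a Kähler metric with $B\ge 0$. Since this is a closed complex manifold, Mok's theorem applies: lifting the metric to the universal cover $\widetilde{{\rm Sym}^k(\Sigma_g)}$ makes it isometrically biholomorphic to a product $\mathbb{C}^m\times Z$, where $Z$ is a \emph{compact} simply connected Kähler manifold --- a product of projective spaces and irreducible compact Hermitian symmetric spaces. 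Hence $\widetilde{{\rm Sym}^k(\Sigma_g)}$ would have the rational cohomology of $Z$: finite-dimensional, and concentrated in even degrees. The whole proof then reduces to showing that the universal cover cannot have such cohomology.

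The case $k=1$ is special and easy: ${\rm Sym}^1(\Sigma_g)=\Sigma_g$, the condition $B\ge 0$ is just nonnegativity of the Gauss curvature, and Gauss--Bonnet (or Mok's theorem, which would force $\widetilde{\Sigma_g}\cong\mathbb{C}$ or $\mathbb{CP}^1$) gives $g\le 1$. So assume $k\ge 2$. Then $\pi_1({\rm Sym}^k(\Sigma_g))\cong\mathbb{Z}^{2g}$, and the Abel--Jacobi map $a_k\colon{\rm Sym}^k(\Sigma_g)\to{\rm Pic}^k(\Sigma_g)$ --- essentially the Albanese map --- induces an isomorphism on fundamental groups; so $\widetilde{{\rm Sym}^k(\Sigma_g)}$ is the fibre product of $a_k$ with the universal cover $\mathbb{C}^g\to{\rm Pic}^k(\Sigma_g)$, mapping to $\mathbb{C}^g$ with fibres the linear systems $\mathbb{P}(H^0(L))$. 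It then suffices to prove that, for $g>1$ and $2\le k\le 2g-2$, this space has infinite-dimensional rational cohomology.

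Establishing that is, I expect, the main obstacle, and it is essentially a homotopy-type computation. The key structural fact is that $k\le 2g-2$ is exactly the range in which $a_k$ is not a projective-space bundle: by Riemann--Roch the fibre dimension jumps over the Brill--Noether loci, the smallest such jump being over the canonical class when $k=2g-2$. Pulling back along $\mathbb{C}^g\to{\rm Pic}^k(\Sigma_g)$ replaces each jump locus by infinitely many $\mathbb{Z}^{2g}$-translates of a positive-codimension subvariety, each contributing an independent algebraic class, so the total rational cohomology becomes infinite-dimensional. This is already transparent in the minimal case $g=2$, $k=2$: there $a_2$ contracts the unique line over the hyperelliptic class, so ${\rm Sym}^2(\Sigma_2)$ is the blow-up of its Jacobian at a point and $\widetilde{{\rm Sym}^2(\Sigma_2)}$ is $\mathbb{C}^2$ blown up at a coset of the period lattice --- homotopy equivalent to an infinite wedge of $2$-spheres, with $\dim_{\mathbb{Q}}H^2=\infty$. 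Making this precise for all admissible $(g,k)$, ideally by pinning down the homotopy type of $\widetilde{{\rm Sym}^k(\Sigma_g)}$ outright or else by exhibiting infinitely many independent classes in a single degree, is the heart of the matter; with it in place the contradiction with Mok's theorem is immediate. The method is sharp: for $k\ge 2g-1$ one has a genuine $\mathbb{CP}^{k-g}$-bundle and $\widetilde{{\rm Sym}^k(\Sigma_g)}$ is homeomorphic to $\mathbb{C}^g\times\mathbb{CP}^{k-g}$, a space Mok's theorem permits, so no obstruction of this kind survives.
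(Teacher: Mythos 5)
Your reduction is exactly the one the paper uses: assume a K\"ahler metric with $B\ge 0$, invoke Mok's uniformisation theorem to write the universal cover as $\mathbb{C}^m$ times a compact simply connected factor, conclude that the homology of $\widetilde{{\rm Sym}^k(\Sigma_g)}$ would have to be finitely generated, and dispose of $k=1$ by Gau\ss--Bonnet. The genuine gap is that you never establish the statement that actually produces the contradiction, namely that for $g>1$ and $2\le k\le 2g-2$ the homology of the universal cover is \emph{not} finitely generated (the paper's Corollary~\ref{CORnotfg}); you explicitly defer it as ``the heart of the matter'' and offer only a heuristic plus the single worked case $g=k=2$. Moreover, the heuristic does not extend as stated. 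For $2\le k<g$ the Abel--Jacobi map is not surjective, and on a Brill--Noether general curve there are no positive-dimensional fibres at all: for a non-hyperelliptic curve of genus $3$ and $k=2$, the map embeds ${\rm Sym}^2(\Sigma_3)$ onto the smooth surface $W_2\subset{\rm Pic}^2(\Sigma_3)$, so there are no jump loci and no exceptional algebraic cycles of the kind you invoke, yet $H_2$ of the universal cover still has infinite rank (the classes come from the $\ZZ^{2g}$-cover of $W_2$ itself, not from special fibres). Even in the range $g\le k\le 2g-2$, where special fibres do exist, the claim that their infinitely many $\ZZ^{2g}$-translates give \emph{independent} classes in the non-compact cover is precisely what needs proof --- one must pair them against something (compactly supported cohomology, properly embedded dual cycles, or similar), and no such argument is indicated.

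The paper fills this hole by a different and entirely explicit computation: it lifts the Kallel--Salvatore cell decomposition of ${\rm Sym}^k(\Sigma_g)$ to the universal cover, identifies the cellular chains with the grading-$\le k$ part of the DGA $\ZZ[x_1^{\pm1},\dots,y_g^{\pm1}]\otimes\Lambda[e_1,\dots,f_g]\otimes\Gamma[\gamma_2]$ with $\partial e_i=x_i-1$, $\partial f_i=y_i-1$, $\partial\gamma_s=\lambda\gamma_{s-1}$ (Theorem~\ref{th:chain_complex}), and computes its homology (Theorem~\ref{th:main}): in degree $k$ one gets $\lambda\Cycles_k$ ($k$ odd) or $\Cycles_k/\lambda\Cycles_{k-1}\oplus\ZZ$ ($k$ even), and Lemma~\ref{remarks} exhibits a nonzero element of $\lambda\Cycles_k$ for $2\le k\le 2g-2$, which generates a free $\ZZ[\pi]$-submodule and hence forces these groups to be infinitely generated over $\ZZ$. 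Something of this strength --- either this chain-level computation or a rigorous version of your Albanese/Brill--Noether picture valid for every $(g,k)$ in range, including the cases with no fibre-dimension jumps --- is indispensable; without it your argument is a correct reduction of the theorem to a topological statement, not a proof of it.
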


\noindent
Of course, this result applies to the  $L^2$-metric of vortices
described above and is independent of the choice of metric and complex
structure on $\Sigma$, as well as of the Hermitian inner product on $\mathcal{L}$.
Theorem~\ref{THMholbiscur} is somehow unexpected in the light of what is
known about the curvature behaviour of 
related moduli spaces with their natural K\"ahler structures. For example, the moduli spaces
of Higgs bundles
over a compact Riemann surface carry natural metrics~\cite{HitSDERS}. Fibre-integral type formulas for their curvature
tensors have been obtained using deformation theory in terms of representatives
of Kodaira--Spencer classes, from which one can deduce
that these metrics do have nonnegative holomorphic sectional 
curvature~\cite{BisSchGMSHB}.
Similar positivity results have been
established for the curvature of natural metrics on moduli spaces of unitary bundles and 
holomorphic vector bundles over a compact Riemann surface~\cite{BisSchDBQM,
BisSchKSMSPB}.
All these results build up on earlier work of Itoh on the geometry of spaces of anti-selfdual connections~\cite{ItoGASDC}; 
see also~\cite{TakZogGMSVB} and~\cite{JosPenGMSSVB}.

An immediate application of Theorem~\ref{THMholbiscur} is that it rules out a natural conjecture of Biswas and Schumacher stated in reference~\cite{BisSchCVE} as Corollary~8.2 (without proof);
the authors assert that a related metric on the moduli space of stable triples $(\mathcal{E}_1, \mathcal{E}_2, \varphi)$ associated to solutions of coupled vortex equations~\cite{GarcPraDRSB, BradGarcPra}
should have nonnegative holomorphic bisectional curvature.
Here, ${\mathcal{E}_i}$ ($i=1,2$) are holomorphic vector bundles over a closed Riemann surface $\Sigma$ and $\varphi:\mathcal{E}_2 \rightarrow \mathcal{E}_2$ an
$\mathcal{O}_{\Sigma}$-linear holomorphic map. 
If one takes $\mathcal{E}_2$ to be the trivial vector bundle, $\varphi$ can be
identified with a section  $\phi: \Sigma \rightarrow \mathcal{E}_1$
(which determines the morphism $\varphi$ by fibrewise multiplication) satisfying (\ref{vort1}), and the coupled vortex equations reduce to the vortex equation (\ref{vort2}). Thus gauged vortices are recovered as a special case, and our result above provides counter-examples of this conjecture in the nontrivial multivortex case. In fact, in the case $k=1$, 
for which $\mathcal{M}_1$ (as a complex manifold) is $\Sigma$ itself, and assuming $g\ge 2$, this assertion also follows trivially from the theorem of Gau\ss--Bonnet, since the Gau\ss\ curvature and the holomorphic bisectional curvature
on Riemann surface can easily be seen to coincide. For this reason, our interest will be 
strictly in the situation where $k>1$.

The strategy of proof of Theorem~\ref{THMholbiscur} in our paper is the following. 
If ${\rm Sym}^{k}(\Sigma_{g})$ had a K\"a{}hler metric with 
nonnegative holomorphic bisectional curvature, so would its universal
cover $\widetilde{{\rm Sym}^{k}(\Sigma_{g})}$. Using classification results (specifically, ``rigidity'' results obtained
by Mok in the 1980s, cf.~\cite{MokMRT}), 
this would imply that all homology groups of the universal covering
space would be finitely generated. But we can compute these
groups using algebraic topology, and it turns out that, if $1<k\le 2g-2$,
the middle homology groups $H_{k}(\widetilde{{\rm Sym}^{k}(\Sigma_{g})};\mathbb{Z})$ are not finitely generated.

It was pointed out to us that an alternative proof of Theorem~\ref{THMholbiscur}
can be presented, based on results obtained by Fang~\cite{Fan} on
the characterisation of compact K\"ahler manifolds whose holomorphic
bisectional curvature satisfies a condition that is weaker than
the notion of nonnegativity used in the present paper. This argument
explores the fact that the Albanese variety~\cite{GriHar} of ${\rm Sym}^{k}(\Sigma)$
coincides with the Jacobian variety of $\Sigma$. General properties
of Abelian varieties then imply that, if ${\rm Sym}^{k}(\Sigma)$ had nonnegative
holomorphic bisectional curvature, it would have to be
a holomorphic fibration over this Jacobian, as a consequence of
Fang's Theorem C in~\cite{Fan}; but this leads to a contradiction for the
range of values of $k$ considered in the hypothesis of Theorem~\ref{THMholbiscur}.
Since the argument of proof sketched in the preceding paragraph is
more constructive, in the sense that it produces {new information} on
the topology of vortex moduli spaces as by-product, we believe
that our preference for the strategy of proof using Mok's rigidity
results, as given in the present paper, is not unjustified. We are also aware of~\cite{BisCurv}.

\begin{remark}
There is another point of view on the calculations in this paper. Our main strategy to prove  Theorem~\ref{THMholbiscur} is to
study the topology of the {\em universal cover} of the moduli space of vortices --- and this cover has independent interest,
since it plays an important role at the level of quantisation. To be more precise: the structure of supersymmetric quantum field theories is
 reflected in the topology and the Hodge theory of  the {universal covers} of their classical 
moduli spaces. One famous instance of this is provided by
the beautiful Sen conjectures~\cite{Sen}. 
There is a purely mathematical formulation of these conjectures, which amounts to precise statements about  $L^2$-cohomology of the universal covers
of the moduli spaces of centred magnetic monopoles in $\mathbb{R}^3$. 
The dimension of $L^2$-cohomology groups gives information about the spectrum of bound states at the lowest energy level,  
which can therefore be extracted directly from the underlying classical field theory. A variety of methods has been used to prove part of these assertions~\cite{Seg,SegSel,SetSteZas,HitL2}. 
There is a striking analogy between this body of results and some of the topological underpinnings of our paper. In particular, from Theorem~\ref{th:main} below one can establish vanishing results for certain $L^2$-Betti numbers 
and then, using standard techniques~\cite{Lue}, compute the nontrivial $L^2$-Betti numbers of the moduli spaces we are considering
is this paper. In ongoing work, we study the $L^2$-cohomology for vortices in the same spirit as the papers \cite{Seg,SegSel} mentioned above address  the Sen conjectures for
monopoles. 
\end{remark}

Let us present the layout of this paper. We start by sketching standard topological results on the symmetric
products of a closed oriented surface in Section~\ref{topSym}. 
There is a cell decomposition of ${\rm Sym}^{k}(\Sigma)$ due to
Kallel and Salvatore; in Section \ref{sec:CW} 
we recall this construction, and show how to lift it to obtain a cell
decomposition of the universal cover, for which we can write down the
cellular chain complex. In Section \ref{sec:Homology}, we do the
algebraic computation that calculates the homology groups of this chain complex.
For completeness, we determine the homotopy type of the universal
cover in Section \ref{sec:Homotopy}. In Section \ref{sec:Rigidity}
we conclude the proof of Theorem \ref{THMholbiscur}, and make some
final remarks about vortices in Section~\ref{hyperel}.

\begin{remark}
After this work was completed, Indranil Biswas wrote a preprint~\cite{Bisnew} proving that the hypothesis $k\le 2g-2$ in our Theorem~\ref{THMholbiscur} is in fact unnecessary,
using arguments
in algebraic geometry which apply strictly to the case $k > 2g-2$.
\end{remark}

\section{Symmetric products of a closed oriented surface}\label{topSym}

For a smooth closed oriented surface $\Sigma$ and a positive integer $k$, the $k$-fold symmetric product
\[
{\rm Sym}^{k}(\Sigma):= \Sigma^k/\mathfrak{S}_k
\]
is a smooth, compact $2k$-dimensional manifold. If $\Sigma$ is a complex curve (i.e.\ is given a complex structure), then ${\rm Sym}^{k}(\Sigma)$ inherits a complex structure, and it parametrizes degree-$k$ effective divisors on $\Sigma$. This justifies the importance of these spaces in classical algebraic geometry, and also why they appear as moduli spaces of $k$-vortices on $\Sigma$. 
Our main goal in this paper is to prove a curvature property for  K\"ahler metrics on these manifolds which is implied by the topology of their universal covers. To begin with, we shall recall some standard facts about the topology of the symmetric products themselves,
and make some preliminary observations that situate our main topological results within the algebraic-geometric context.

The cohomology ring of ${\rm Sym}^{k}(\Sigma)$ was computed by Macdonald in~\cite{MacSPAC}, and we will quote the result here for completeness. 
Suppose that $\Sigma\equiv \Sigma_g$ has genus $g\ge 1$.
Start by fixing a symplectic basis $\{ x_1,\ldots x_g, y_1,\ldots, y_g \}$ for
$H_1(\Sigma_g;\mathbb{Z})\cong\mathbb{Z}^{2g}$, i.e.\ 1-cycles for which
the relations
\begin{equation} \label{symplbasis}
\sharp \langle x_i, y_j \rangle = \delta_{ij}, \qquad\sharp \langle x_i,x_j \rangle =\sharp \langle y_i, y_j \rangle=0
\end{equation}
for the intersection pairing are satisfied. (This basis is unique up to a base change in ${\rm Sp}_{2g}\mathbb{Z}$.) The Poincar\'e-dual basis
$\{ \alpha_i\}_{i=1}^{2g}$ of $H^1(\Sigma_g;\mathbb{Z})$ then generates the integral cohomology ring of $\Sigma_g$, where we use labels (mod $2g$) such that all cup products
are determined by the relations
\begin{eqnarray*}
& \alpha_i \alpha_j =0, & i \ne j \pm g, \\
& \alpha_i \alpha_{i+g} = - \alpha_{i+g} \alpha_i =\beta, & 1 \le i \le g,
\end{eqnarray*}
where $\beta \in H^{2}(\Sigma_g;\mathbb{Z})$ is the fundamental class (see~\cite{FulAT}). Macdonald shows~\cite{MacSPAC} that $H^{\ast}({\rm Sym}^k(\Sigma_g);\mathbb{Z})\cong H^{\ast}((\Sigma_g)^k;\mathbb{Z})^{\mathfrak{S}_k}$, and this is
described as follows. For each $1 \le \ell \le g$, let $\pi_\ell:(\Sigma_g)^k \rightarrow \Sigma_g$ denote the projection onto the $\ell$th factor and consider
the classes
\[
\alpha_{i\ell}:=\pi_\ell^{*}\alpha_i, \quad \beta_\ell := \pi_\ell^{*}\beta
\quad \in
H^{\ast}((\Sigma_g)^k;\mathbb{Z})\cong H^{\ast}(\Sigma_g;\mathbb{Z})^{\otimes k},
\]
which satisfy the following relations in degree two:
\begin{eqnarray*}
& \alpha_{i\ell} \alpha_{j\ell} =0 & i \ne j \pm g, \\
& \alpha_{i\ell} \alpha_{i+g,\ell}= -\alpha_{i+g,\ell}\alpha_{i\ell}=\beta_\ell, & 1 \le i \le g, \\
&\alpha_{i \ell},\alpha_{j,\ell'}=-\alpha_{j\ell'}\alpha_{i \ell}, &\ell\ne \ell'
\end{eqnarray*}
We sum over the label $\ell$ and obtain the classes
\begin{eqnarray*}
& \xi_i := \sum_{\ell =1}^{k} \alpha_{i\ell} ,
& 1\le i \le 2g, \\
& \eta := \sum_{\ell=1}^{k}\beta_{\ell} 
\end{eqnarray*}
which descend to $H^{\ast}({\rm Sym}^k(\Sigma_g);\mathbb{Z})$ since they
are $\mathfrak{S}_k$-invariant.
For convenience, we also define
\[
\sigma_i := \xi_i \xi_{i+g}, \qquad 1\le i \le g.
\]
Then we have~\cite{MacSPAC,BerThaQCSP}:

\begin{theorem} \label{THMcohomSym}
The integral cohomology ring $H^{\ast}({\rm Sym}^{k}(\Sigma);\mathbb{Z})$ is
generated by the degree-2 class $\eta$ and the degree-1 classes
$\xi_1, \ldots \xi_{2g}$ (anticommuting with each other and commuting with $\eta$).
For each choice of disjoint subsets $I_1,I_2,J \subset N_g:=\{1,\ldots, g\}$, there is an
extra relation
\begin{equation}\label{relations}
\eta^r\prod_{i_1\in I_1}\xi_{i_1} \prod_{i_2 \in I_2}\xi_{i_2+g}\prod_{j\in J}(\eta-\sigma_j)=0,
\end{equation}
for each $r\in \mathbb{N}_0$ satisfying
\begin{equation}\label{r}
r \le k - |I_1| - |I_2| - 2 |J| + 1.
\end{equation}
If $k>2(g-2)$, all relations (\ref{relations}) are generated by the one with $J=N_g$ and $r=k-2g+1$. If $k \ge 2(g-1)$, they are generated by those for which $r\in \{0,1\}$ and equality holds in (\ref{r}).
\end{theorem}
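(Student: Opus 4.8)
The plan is to promote the additive isomorphism $H^{\ast}({\rm Sym}^{k}(\Sigma_{g});\mathbb{Z})\cong H^{\ast}((\Sigma_{g})^{k};\mathbb{Z})^{\mathfrak{S}_{k}}$ recorded above to a description of the ring structure, throughout using the quotient map $p\colon(\Sigma_{g})^{k}\to{\rm Sym}^{k}(\Sigma_{g})$ together with the \emph{ring} homomorphism $p^{\ast}$ (rather than the transfer, which is only additive). Two facts drop out of the quoted isomorphism: $H^{\ast}({\rm Sym}^{k}(\Sigma_{g});\mathbb{Z})$ is torsion-free, being a $\mathbb{Z}$-submodule of the free module $H^{\ast}(\Sigma_{g};\mathbb{Z})^{\otimes k}$; and $p^{\ast}$ is injective, being rationally injective by the transfer identity $p_{\ast}p^{\ast}=k!$ with torsion-free source. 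I would then read off the graded rank of $H^{\ast}({\rm Sym}^{k}(\Sigma_{g});\mathbb{Z})$: a $\mathbb{Z}$-basis of $H^{\ast}((\Sigma_{g})^{k};\mathbb{Z})^{\mathfrak{S}_{k}}$ is given by the $\mathfrak{S}_{k}$-orbit sums (with Koszul signs) of the tensor monomials whose factors lie in $\{1,\alpha_{1},\dots,\alpha_{2g},\beta\}$, and since such a sum vanishes exactly when some $\alpha_{i}$ occurs twice among the factors (a transposition of two equal odd entries contributes a sign $-1$), these are indexed by pairs $(T,c)$ with $T\subseteq\{1,\dots,2g\}$, $c\in\mathbb{N}_{0}$, $|T|+c\le k$, the one attached to $(T,c)$ sitting in degree $|T|+2c$.

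For generation, take $\eta\in H^{2}$ and $\xi_{i}\in H^{1}$ of ${\rm Sym}^{k}(\Sigma_{g})$ to be the restrictions of the polynomial generators of $H^{\ast}({\rm Sym}^{\infty}\Sigma_{g};\mathbb{Z})=\Lambda_{\mathbb{Z}}[\xi_{1},\dots,\xi_{2g}]\otimes\mathbb{Z}[\eta]$ --- where ${\rm Sym}^{\infty}\Sigma_{g}\simeq T^{2g}\times\mathbb{CP}^{\infty}$ by Dold--Thom --- along the stabilisation ${\rm Sym}^{k}(\Sigma_{g})\hookrightarrow{\rm Sym}^{\infty}(\Sigma_{g})$; a short computation with the H-space structure shows $p^{\ast}\eta=\sum_{\ell}\beta_{\ell}$ and $p^{\ast}\xi_{i}=\sum_{\ell}\alpha_{i\ell}$, reconciling these with the classes named in the statement. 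That $\eta,\xi_{1},\dots,\xi_{2g}$ generate the ring follows from homological stability of symmetric products: the restriction maps $H^{\ast}({\rm Sym}^{k+1}\Sigma_{g};\mathbb{Z})\to H^{\ast}({\rm Sym}^{k}\Sigma_{g};\mathbb{Z})$ are degreewise surjective --- provable, say, from the Gysin sequence of the codimension-two embedding ${\rm Sym}^{k-1}\Sigma_{g}\hookrightarrow{\rm Sym}^{k}\Sigma_{g}$, whose open complement is ${\rm Sym}^{k}(\Sigma_{g}\setminus\{{\rm pt}\})$ --- so that $H^{\ast}({\rm Sym}^{\infty}\Sigma_{g};\mathbb{Z})\twoheadrightarrow H^{\ast}({\rm Sym}^{k}\Sigma_{g};\mathbb{Z})$.

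Next I would check the relations and that they are exhaustive. As $p^{\ast}$ is an injective ring map, it suffices to verify (\ref{relations}) after $p^{\ast}$. One has $p^{\ast}(\eta-\sigma_{j})=\sum_{\ell}\beta_{\ell}-\bigl(\sum_{\ell}\alpha_{j\ell}\bigr)\bigl(\sum_{\ell'}\alpha_{j+g,\ell'}\bigr)=-\sum_{\ell\neq\ell'}\alpha_{j\ell}\alpha_{j+g,\ell'}$, the diagonal cancelling since $\alpha_{j\ell}\alpha_{j+g,\ell}=\beta_{\ell}$, so $p^{\ast}(\eta-\sigma_{j})$ is purely off-diagonal. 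Expanding $p^{\ast}$ of the left side of (\ref{relations}) into tensor monomials, a summand survives only if the classes landing in any single tensor slot multiply nontrivially: no slot may carry a $\beta_{\ell}$ alongside anything else, and no slot may carry two $\alpha$'s with labels that are not of the complementary form $i$, $i+g$. Because $I_{1},I_{2},J\subseteq N_{g}$ are pairwise disjoint, the only complementary pairs of labels occurring are the $\{j,j+g\}$ with $j\in J$, and those two $\alpha$'s are deposited in \emph{different} slots by the off-diagonal factor $p^{\ast}(\eta-\sigma_{j})$; hence every surviving monomial occupies at least $r+|I_{1}|+|I_{2}|+2|J|$ distinct tensor slots, and the product vanishes as soon as this exceeds $k$ --- exactly condition (\ref{r}). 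For exhaustiveness, let $R_{k}$ be the quotient of $\Lambda_{\mathbb{Z}}[\xi_{1},\dots,\xi_{2g}]\otimes\mathbb{Z}[\eta]$ by the relations (\ref{relations}); we get a surjection $R_{k}\twoheadrightarrow H^{\ast}({\rm Sym}^{k}(\Sigma_{g});\mathbb{Z})$, and it is an isomorphism once $R_{k}$ is shown to be generated, in each degree $d$, by at most $n_{d}$ elements, where $n_{d}$ is the $\mathbb{Z}$-rank of $H^{d}({\rm Sym}^{k}(\Sigma_{g});\mathbb{Z})$ found in the first paragraph: for then the composite $\mathbb{Z}^{n_{d}}\twoheadrightarrow R_{k}^{d}\twoheadrightarrow H^{d}({\rm Sym}^{k}(\Sigma_{g});\mathbb{Z})\cong\mathbb{Z}^{n_{d}}$ is a surjective endomorphism of $\mathbb{Z}^{n_{d}}$, hence an isomorphism, and so are both factors. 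This I would do by reducing every monomial $\eta^{r}\prod_{i\in S}\xi_{i}$ to standard form: trade each complementary pair $\{j,j+g\}\subseteq S$ via $\xi_{j}\xi_{j+g}=\sigma_{j}=\eta-(\eta-\sigma_{j})$ and apply (\ref{relations}) to discard the terms whose $\eta$-exponent is too high, then count the surviving standard monomials against the pairs $(T,c)$. I expect this combinatorial bookkeeping --- producing the standard-monomial spanning set of $R_{k}$ of the right size in every degree --- to be the one step that is genuine work; everything before it is formal.

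Finally, the two closing assertions are elementary manipulations inside the free algebra once (\ref{relations}) is established. Writing $P(I_{1},I_{2},J;r)$ for the left side of (\ref{relations}), the identity $P(I_{1},I_{2},J;r)=\eta\cdot P(I_{1},I_{2},J;r-1)$ shows by downward induction on $r$ that every relation (\ref{relations}) is a multiple of the one with the same $(I_{1},I_{2},J)$ and $r$ lowered to $\max(0,\,k-|I_{1}|-|I_{2}|-2|J|+1)$, so only the relations with equality in (\ref{r}) (and those at $r=0$) are needed; expanding the products $\prod_{j\in J}(\eta-\sigma_{j})$ and multiplying back by monomials in $\eta$ and the $\xi_{i}$ then shows that when $k\ge 2(g-1)$ those with $r\in\{0,1\}$ already generate, while when $k>2(g-2)$ the single relation with $J=N_{g}$, $r=k-2g+1$ does. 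This last computation is carried out in~\cite{MacSPAC,BerThaQCSP}.
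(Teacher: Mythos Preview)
The paper does not actually give its own proof of this theorem: it is quoted as a known result, attributed to Macdonald~\cite{MacSPAC} (with the final two sentences on generating relations drawn from Bertram--Thaddeus~\cite{BerThaQCSP}). So there is no in-paper argument to compare against.

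That said, your sketch is essentially Macdonald's original strategy: embed $H^{*}({\rm Sym}^{k}\Sigma_{g};\mathbb{Z})$ into $H^{*}(\Sigma_{g}^{k};\mathbb{Z})$ via the injective ring map $p^{*}$, verify the relations there by the pigeonhole count on tensor slots, and match ranks to conclude the relations are complete. Your use of Dold--Thom and the stabilisation surjection to establish generation is a mild streamlining over Macdonald's more hands-on identification of generators, but the substance is the same. The one place you flag as ``genuine work'' --- the combinatorial bookkeeping that the standard monomials span $R_{k}$ with the correct count --- is indeed where Macdonald's paper does the real labour; your outline is honest about this and the reduction you describe (trading $\xi_{j}\xi_{j+g}$ for $\eta-(\eta-\sigma_{j})$ and discarding high $\eta$-powers) is the right move. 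The final paragraph correctly defers the generating-relations statements back to the cited sources, which is also what the paper does.
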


Of course, the homology of ${\rm Sym}^k(\Sigma)$ can be read off from this result by Poincar\'e duality, since
these spaces are orientable as complex manifolds.  A more refined approach to this calculation was presented by Kallel and Salvatore in~\cite{KallelSalvatore}, where
a certain homotopy equivalence $\SymmPower k\Sigma \to \ReducedSymm k\Sigma$ is introduced via a quotienting procedure, together with an 
explicit cell decomposition of $\ReducedSymm k\Sigma$; then the cell homology of the quotient space can be used to compute the homology
of the symmetric products themselves. The Kallel--Salvatore construction will play an important r\^ole in our discussion, and we shall review it in Section~\ref{sec:CW}.

To understand the structure
of the universal covers  $\widetilde{{\rm Sym}^k(\Sigma)}$, one might start by fixing a complex structure on $\Sigma$, together with a basepoint $p \in \Sigma$ and a basis of holomorphic 1-forms. These data define an Abel--Jacobi map~\cite{GriHar} for each $k\ge 1$,
\begin{equation}\label{AJ}
{\rm Sym}^k(\Sigma) \longrightarrow  {\rm Jac} (\Sigma) = H^0(\Sigma, K_{\Sigma})^{*}/H_{1}(\Sigma;\mathbb{Z}),
\end{equation}
which can be interpreted as mapping divisors of degree $k$ on $\Sigma$ to equivalence classes of 
holomorphic line bundles. Sections of these bundles (modulo multiplication by scalars) vanish 
at the $k$ points the divisors specify, counted with multiplicities.
If $k>2g-2= \deg  K_{\Sigma}$, then there are no special divisors of degree $k$ (i.e. all divisors correspond to invertible sheaves with vanishing first cohomology group). Riemann--Roch then implies that the corresponding spaces of global sections have dimension $k-g+1$.
Thus we can interpret the pre-image of each line bundle via the map (\ref{AJ})  as a space
of sections modulo scalar multiplication; so the Abel--Jacobi map is a fibre bundle with
typical fibre $\mathbb{CP}^{k-g}$ in this situation.
The Jacobian variety ${\rm Jac}(\Sigma)$ is a real $2g$-torus, and one can draw the diagram
\[
\xymatrix{
&\widetilde{{\rm Sym}^k(\Sigma)}\ar@{.>}[r]\ar[d]& \mathbb{R}^{2g}\ar[d]\\
\mathbb{CP}^{k-g}\ar[r]\ar@{.>}[ru]&{\rm Sym}^{k}(\Sigma)\ar[r]&\mathbb{T}^{2g}
}
\]
(where the dashed maps exist by the elementary lifting criterion of covering maps~\cite{Hatcher}). Thus we conclude that for $k>2g-2$ the universal cover $\widetilde{{\rm Sym}^k(\Sigma)}$
is itself a $\mathbb{CP}^{k-g}$-bundle over a contractible space,  and it has therefore the same homotopy type as the fibres. (Notice that ${\rm Sym}^k{\Sigma_0}\cong\mathbb{CP}^k$, which is simply connected, so this result is trivially true for $g=0$.)

However, when $k \le 2g-1$ this analysis no longer applies and the topology is more complicated,
as the dimension of the fibres of (\ref{AJ}) will jump on special divisors. We shall establish that, in this case, the universal cover $\widetilde{{\rm Sym}^k(\Sigma)}$ turns out to have  the same homotopy type as the wedge sum of a complex projective space (whose dimension may be smaller than $k-g$) and an infinite bouquet of $k$-spheres --- cf.~Theorem~\ref{homotopytype}.

A basic fact that we shall use below is the following result:

\begin{proposition} \label{THMpi1Sym}
For all $k>1$, $\pi := \pi_1({\rm Sym}^k (\Sigma_g)) 
\cong H_1(\Sigma_g; \mathbb{Z}) \cong \mathbb{Z}^{2g}$.
\end{proposition}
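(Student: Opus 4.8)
The plan is to prove that $\pi := \pi_1({\rm Sym}^k(\Sigma_g))$ is abelian whenever $k>1$; once this is known, the isomorphism $\pi\cong\mathbb{Z}^{2g}$ will follow formally. I would exhibit $\pi$ both as a quotient of $\mathbb{Z}^{2g}\cong H_1(\Sigma_g;\mathbb{Z})$ and as a group surjecting onto $\mathbb{Z}^{2g}$; for the latter I would use the map on first homology induced by the Abel--Jacobi map ${\rm Sym}^k(\Sigma_g)\to{\rm Jac}(\Sigma_g)$, which restricted to ${\rm Sym}^1(\Sigma_g)=\Sigma_g$ differs by a translation from the classical Abel--Jacobi map and hence is an isomorphism on $H_1$ (alternatively, $\pi$ being abelian, Theorem~\ref{THMcohomSym} gives $H^1({\rm Sym}^k(\Sigma_g);\mathbb{Z})={\rm Hom}(\pi,\mathbb{Z})\cong\mathbb{Z}^{2g}$, since (\ref{relations}) contributes no relation in degree $1$). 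Since $\mathbb{Z}^{2g}$ is Hopfian, a group sandwiched by surjections $\mathbb{Z}^{2g}\twoheadrightarrow\pi\twoheadrightarrow\mathbb{Z}^{2g}$ must itself be $\mathbb{Z}^{2g}$.

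To prove $\pi$ is abelian I would realise ${\rm Sym}^k(\Sigma_g)$ as an orbit space of a contractible manifold. The universal cover $\widetilde{\Sigma_g}$ is contractible (it is $\mathbb{R}^2$ if $g=1$, the hyperbolic plane if $g>1$), hence so is $(\widetilde{\Sigma_g})^k$, which is the universal cover of $(\Sigma_g)^k$. The $\mathfrak{S}_k$-action on $(\Sigma_g)^k$ lifts to an action on $(\widetilde{\Sigma_g})^k$ of a group $\Gamma$ fitting in a short exact sequence
\[
1\longrightarrow\pi_1(\Sigma_g)^k\longrightarrow\Gamma\longrightarrow\mathfrak{S}_k\longrightarrow1,
\]
with $\pi_1(\Sigma_g)^k$ acting by deck transformations, and $(\widetilde{\Sigma_g})^k/\Gamma=(\Sigma_g)^k/\mathfrak{S}_k={\rm Sym}^k(\Sigma_g)$. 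Since $\pi_1(\Sigma_g)^k$ has finite index in $\Gamma$, the $\Gamma$-action is properly discontinuous, so the classical theorem of Armstrong on fundamental groups of orbit spaces gives $\pi\cong\Gamma/H$, where $H$ is the (automatically normal) subgroup of $\Gamma$ generated by those elements that have a fixed point in $(\widetilde{\Sigma_g})^k$.

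Next I would compute $\Gamma/H$ by elementary group theory. For each adjacent transposition $\tau_i=(i\ i{+}1)\in\mathfrak{S}_k$, let $\tilde\tau_i\in\Gamma$ be the honest permutation of $(\widetilde{\Sigma_g})^k$ swapping the $i$th and $(i{+}1)$st factors; this is a lift of $\tau_i$ and fixes the diagonal $\{x_i=x_{i+1}\}$, so $\tilde\tau_i\in H$. Since $\pi_1(\Sigma_g)^k$ together with $\tilde\tau_1,\dots,\tilde\tau_{k-1}$ generate $\Gamma$, the quotient $\Gamma/H$ is generated by the images of the $k$ coordinate subgroups $\pi_1(\Sigma_g)_j\le\pi_1(\Sigma_g)^k$. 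But $\tilde\tau_i$ conjugates $\pi_1(\Sigma_g)_i$ onto $\pi_1(\Sigma_g)_{i+1}$ inside $\Gamma$, and $\tilde\tau_i$ becomes trivial in $\Gamma/H$, so all these images coincide; call the common value $\bar K\le\Gamma/H$. As distinct coordinate subgroups commute in $\pi_1(\Sigma_g)^k$ and $k\ge2$, the subgroup $\bar K$ commutes with itself and is therefore abelian; and $\Gamma/H$, being generated by $\bar K$, equals $\bar K$. Hence $\pi\cong\bar K$ is abelian, and as a quotient of $\pi_1(\Sigma_g)$ it is in fact a quotient of $\pi_1(\Sigma_g)^{\rm ab}\cong\mathbb{Z}^{2g}$ --- the remaining ingredient needed above.

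The one substantive step is the identification $\pi\cong\Gamma/H$ via Armstrong's theorem, i.e.\ controlling how the branch locus of $(\Sigma_g)^k\to{\rm Sym}^k(\Sigma_g)$ affects the fundamental group; everything else is formal. If one prefers to avoid this theorem, one can instead observe that the configuration space $C_k(\Sigma_g)={\rm Sym}^k(\Sigma_g)\setminus\Delta$ has complement of real codimension $2$, whence $\pi_1(C_k(\Sigma_g))\twoheadrightarrow\pi$, and that $\pi$ is obtained from the surface braid group $\pi_1(C_k(\Sigma_g))$ by killing the meridians around $\Delta$ --- which near a generic point of $\Delta$ are precisely the standard half-twist generators --- reaching the same conclusion after the same bookkeeping with the pure-braid (coordinate) subgroups.
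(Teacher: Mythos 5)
Your argument is correct, but it takes a genuinely different route from the paper. The paper disposes of this proposition by citing M.~Roth's proof (in the Bertram--Thaddeus reference): one studies the inclusion $z \mapsto z+(k-1)p$ of $\Sigma_g$ into ${\rm Sym}^k(\Sigma_g)$ and shows directly that it induces a surjection on fundamental groups with abelian image and kernel exactly $[\pi_1(\Sigma_g),\pi_1(\Sigma_g)]$. You instead realise ${\rm Sym}^k(\Sigma_g)$ as the orbit space of a properly discontinuous action of an extension $1\to\pi_1(\Sigma_g)^k\to\Gamma\to\mathfrak{S}_k\to 1$ on the contractible space $(\widetilde{\Sigma_g})^k$ and invoke Armstrong's theorem, so that $\pi\cong\Gamma/H$ with $H$ generated by elements having fixed points; abelianness then falls out structurally, because the lifted transpositions lie in $H$, they identify all coordinate copies of $\pi_1(\Sigma_g)$ in the quotient, and distinct coordinate copies commute (this is where $k>1$ enters, exactly as it must). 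The identification with $\mathbb{Z}^{2g}$ via the Abel--Jacobi map (or Macdonald's $H^1$ computation) plus the Hopfian sandwich is sound: two surjections $\mathbb{Z}^{2g}\twoheadrightarrow\pi\twoheadrightarrow\mathbb{Z}^{2g}$ force $\pi\cong\mathbb{Z}^{2g}$ with no compatibility needed. What your approach buys is a self-contained proof (modulo Armstrong's classical theorem and the requisite hypotheses --- simple connectivity and proper discontinuity hold here since $\Gamma$ contains the deck group with finite index) that makes the branched-covering mechanism behind the abelianisation explicit; what the paper's cited argument buys is a sharper statement, identifying the kernel of $\pi_1(\Sigma_g)\to\pi$ precisely as the commutator subgroup without appealing to orbit-space machinery. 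Two minor remarks: your use of the contractibility of $\widetilde{\Sigma_g}$ implicitly assumes $g\ge 1$, which is the paper's standing assumption anyway; and your closing alternative via the surface braid group of the configuration space is only a sketch (the claim that the meridians are exactly the half-twist generators needs the same local analysis near the diagonal), but it is not needed for the main line of the proof.
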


\noindent
A proof by M.\ Roth was presented in~\cite{BerThaQCSP}. This involves looking at the map
$\Sigma \rightarrow {\rm Sym}^{k}(\Sigma)$ given by $z \mapsto z+(k-1)p$ (for a fixed basepoint $p \in \Sigma$) and showing that
it induces a surjection of fundamental groups, with Abelian image and kernel $[\pi_1(\Sigma),\pi_1(\Sigma)]$.

In the rest of this paper, we will always use $x_1, \ldots, x_g,y_1, \ldots, y_g$ to denote generators of the fundamental group $\pi$ of the symmetric product ${\rm Sym}^k (\Sigma_g)$. These generators can be regarded as images of a standard basis for $H_1(\Sigma_g;\mathbb{Z})$ as in (\ref{symplbasis}) under the isomorphism in Proposition~\ref{THMpi1Sym}.

\section{A cell decomposition of the universal cover}
\label{sec:CW}

In \cite{KallelSalvatore}, Kallel and Salvatore define a quotient map
$r:\SymmPower k\Sigma \to \ReducedSymm k\Sigma$, where $\Sigma=\Sigma_g$
is a closed oriented surface of genus $g$ . They prove that this 
quotient map is a homotopy equivalence, and give a cell
decomposition of $ \ReducedSymm k\Sigma$. They use this
cell decomposition to compute the homology of 
$\ReducedSymm k\Sigma$, and thus of the homology of $\SymmPower k\Sigma$.

Our main goal in this section is to explain how this cell decomposition can be lifted to study the
universal covers $\widetilde{{\rm Sym}^k(\Sigma)}$, and to describe the associated cell complex. 
We begin by recalling the original construction in some detail; there is nothing new in our 
presentation, except maybe that we make an effort to be gentle to
non-topologists. After lifting this construction, we will prove the following theorem:

\begin{theorem}
\label{th:chain_complex}
Let $\Sigma$ be a closed Riemann surface of genus $g$.
Let ${\rm Sym}^\infty(\Sigma)$ be the space defined by the direct limit $\colim{k}\widetilde{{\rm Sym}^k(\Sigma)}$.
There is a cell decomposition of ${\rm Sym}^\infty(\Sigma)$ such that the
corresponding cellular chain complex ${\mathcal C}_{*}({\rm Sym}^\infty(\Sigma))$ is    
the differential graded algebra 
\[
\ZZ[x_{1},\dots x_{g},y_{1},\dots
y_{g}]\otimes \Lambda[e_{1},\dots,e_{g},f_{1},\dots f_{g}]\otimes
\Gamma[\gamma_{2}].
\]
The degrees of the generators $x_{i},y_{j}$ are $0$, the degrees of the
generators
$e_{i},f_{j}$ are $1$, and the degree of $\gamma_{s}$ is $s$.
The boundary map is given by
\begin{align*}
\partial(e_{i})&=1-x_{i},\\
\partial(f_{i})&=1-y_{i},\\
\partial(\gamma_{s})&=\sum_{i}((1-y_{i})e_{i}-(1-x_{i})f_{i})\gamma_{s-1}.
\end{align*}
This cell decomposition restricts to cell decompositions of 
each $k$-skeleton ${\rm Sk}^{k}({\rm Sym}^\infty(\Sigma))=\widetilde{{\rm Sym}^k(\Sigma)}$. Give
${\mathcal C}_{*}({\rm Sym}^\infty(\Sigma))$ a multiplicative grading $\mathcal F$ such that
the gradings of $e_{i}$ and $f_{i}$ equal 1, and the grading of $\gamma_{s}$
equals $s$. Then the cellular chain complex of
${\rm Sk}^{k}(\Sigma)$ equals $\oplus_{i\leq k}{\mathcal F}^{i}{\mathcal C}_{*}({\rm Sym}^\infty(\Sigma))$.
\end{theorem}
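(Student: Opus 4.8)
The plan is to take the Kallel--Salvatore cell decomposition and lift it to the universal cover, the real content being to keep track of deck transformations. I would start by recalling the input. Use the standard CW model of $\Sigma=\Sigma_g$: one $0$-cell $v$, one-cells $a_1,b_1,\dots,a_g,b_g$, and one $2$-cell $c$ glued along $w=\prod_{i=1}^g[a_i,b_i]$. In~\cite{KallelSalvatore} this is used to equip $\ReducedSymm{\infty}{\Sigma}:=\colim{k}\ReducedSymm{k}{\Sigma}$ with a cell decomposition whose cells are indexed by pairs $(S,m)$ with $S\subseteq\{a_1,b_1,\dots,a_g,b_g\}$ and $m\ge0$; the cell $(S,m)$ comes from ``using each one-cell of $S$ once and the $2$-cell $m$ times'', has dimension $|S|+2m$, and the subcomplex of cells with $|S|+m\le k$ is $\ReducedSymm{k}{\Sigma}$. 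As a graded group the cellular chain complex is $\Lambda[e_1,\dots,e_g,f_1,\dots,f_g]\otimes\Gamma[\gamma_2]$, with $e_i\leftrightarrow a_i$, $f_i\leftrightarrow b_i$ and $\gamma_s$ recording $s$ copies of $c$; the concatenations $\SP^{a}\times\SP^{b}\to\SP^{a+b}$ make it a DGA, and its cellular boundary vanishes (this is how Kallel--Salvatore read off the homology).

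Next I pass to universal covers. For $k>1$, Proposition~\ref{THMpi1Sym} together with the homotopy equivalence $r$ gives $\FundGroup_1(\ReducedSymm{k}{\Sigma})\cong\FundGroup=\ZZ^{2g}$ with the generators $x_i,y_j$ realised by the one-cell loops $e_i,f_j$; moreover the inclusions $\ReducedSymm{k}{\Sigma}\hookrightarrow\ReducedSymm{k+1}{\Sigma}$ agree on $2$-skeleta for $k\ge2$, hence are $\FundGroup_1$-isomorphisms. Lifting the cell structure, $\widetilde{\ReducedSymm{\infty}{\Sigma}}$ acquires one cell over each $(S,m)$ for each element of $\FundGroup$, so its cellular chain complex is the free $\ZZ[\FundGroup]$-module on the cells $(S,m)$, that is $\ZZ[\FundGroup]\otimes\Lambda[e_i,f_j]\otimes\Gamma[\gamma_2]$ (with $\ZZ[\FundGroup]$ the group ring on the $x_i,y_j$, this is the algebra in the statement), now carrying a twisted boundary $\widetilde\partial$ obtained by lifting attaching maps. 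Since $r$ is a homotopy equivalence and a $\FundGroup_1$-isomorphism, $\colim{k}\widetilde{\SymmPower{k}{\Sigma}}={\rm Sym}^\infty(\Sigma)$ is identified (up to homotopy equivalence, which is all the curvature application needs) with $\widetilde{\ReducedSymm{\infty}{\Sigma}}$; and because $\FundGroup_1(\ReducedSymm{k}{\Sigma})\to\FundGroup_1(\ReducedSymm{\infty}{\Sigma})$ is an isomorphism, the preimage of $\ReducedSymm{k}{\Sigma}$ upstairs is connected, namely $\widetilde{\ReducedSymm{k}{\Sigma}}\simeq\widetilde{\SymmPower{k}{\Sigma}}$.

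The heart of the matter is the computation of $\widetilde\partial$. On degree-$1$ generators it is immediate: $e_i$ is a loop at $v$ representing $x_i$, so it lifts to a path from the vertex over $1$ to the vertex over $x_i$, giving $\widetilde\partial e_i=1-x_i$ up to orientation, and likewise $\widetilde\partial f_j=1-y_j$; the Leibniz rule then pins down $\widetilde\partial$ on every cell with no divided-power factor. The remaining --- and, I expect, hardest --- point is $\widetilde\partial\gamma_s$. Here I would extract from the Kallel--Salvatore attaching data that the boundary sphere of the $\gamma_s$-cell covers, in $\ReducedSymm{\infty}{\Sigma}$, only the codimension-one cells $e_i\gamma_{s-1}$ and $f_i\gamma_{s-1}$ (hitting a cell with $|S|\ge3$ would require $\ge3$ of the $s$ points simultaneously near $\partial c$, which is of too high codimension), and with total multiplicity $0$ since $w$ is a product of commutators. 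To lift, walk along each $a_ib_ia_i^{-1}b_i^{-1}$ recording the ``height'' in $\FundGroup$: since $\FundGroup$ is abelian each sub-walk starts at height $1$, and within one $a_i$ runs from $1$ to $x_i$, $b_i$ from $x_i$ to $x_iy_i$, $a_i^{-1}$ from $x_iy_i$ to $y_i$, $b_i^{-1}$ from $y_i$ to $1$, so the two occurrences of $e_i$ appear at heights $1$ (forward) and $y_i$ (backward) and the two of $f_i$ at heights $x_i$ (forward) and $1$ (backward), yielding $\widetilde\partial\gamma_s=\sum_i\bigl((1-y_i)e_i-(1-x_i)f_i\bigr)\gamma_{s-1}$. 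The delicate work is to make this height bookkeeping rigorous from the Kallel--Salvatore construction rather than from a picture; as a check one verifies $\widetilde\partial^2=0$, which for $\gamma_s$ reduces to commutativity of $\ZZ[\FundGroup]$ and to $\widetilde\partial\bigl(\sum_i((1-y_i)e_i-(1-x_i)f_i)\bigr)=0$.

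Finally, the filtration statement is formal. Give $\Chains_*({\rm Sym}^\infty(\Sigma))$ the multiplicative grading $\mathcal F$ with $|x_i|=|y_j|=0$, $|e_i|=|f_j|=1$, $|\gamma_s|=s$. None of the three boundary formulas raises $\mathcal F$-grading: $\widetilde\partial e_i$ and $\widetilde\partial f_j$ lower it from $1$ to $0$, while each term of $\widetilde\partial\gamma_s$ has $\mathcal F$-grading $0+1+(s-1)=s$. Hence $\bigoplus_{i\le k}\mathcal F^i\Chains_*({\rm Sym}^\infty(\Sigma))$ is a subcomplex, and since it is precisely the $\ZZ[\FundGroup]$-span of the lifts of the cells $(S,m)$ with $|S|+m\le k$, the identification of the previous paragraph shows it is the cellular chain complex of $\widetilde{\SymmPower{k}{\Sigma}}$, i.e.\ ${\rm Sk}^k({\rm Sym}^\infty(\Sigma))=\widetilde{\SymmPower{k}{\Sigma}}$.
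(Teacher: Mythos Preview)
Your proposal is correct and follows essentially the same strategy as the paper: lift the Kallel--Salvatore cell structure on $\ReducedSymm{k}{\Sigma}$ to the $\FundGroup$-cover, compute the boundary on generators by tracing lifts of attaching maps, and read off the filtration from the weight grading. The one tactical difference worth flagging is in the justification of $\widetilde\partial\gamma_{s}$. The paper first computes $\widetilde\partial\gamma_{2}$ by drawing the lifted $4g$-gon (your ``height bookkeeping'' along $\prod_i[a_i,b_i]$ is exactly this computation), and then, rather than arguing directly for higher $s$, it uses that concatenation makes the lifted cellular chain complex a DGA: from the divided-power relation $\gamma_2^{\,s}=s!\,\gamma_{2s}$ and the Leibniz rule one gets $s!\,\widetilde\partial\gamma_{2s}=s\,\gamma_2^{\,s-1}\lambda=s!\,\lambda\gamma_{2(s-1)}$, hence $\widetilde\partial\gamma_{2s}=\lambda\gamma_{2(s-1)}$ since the chain groups are torsion-free. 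Your codimension argument (that a cell with $|S|\ge3$ is hit with degree zero because its preimage under the attaching map lies in a locus of dimension at most $2s-3$) is correct but, as you note, needs care to make rigorous from the Kallel--Salvatore attaching data; the paper's DGA route sidesteps this entirely and is the cleaner option if you have the product structure available.
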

The more technical algebraic terms used in the statement of this theorem will
be clarified in the subsections below.

\subsection{Recollection of a construction by Kallel and Salvatore.}

This is a general construction
which for any two-dimensional CW-complex $Y$ and any natural number $k$
produces a CW-complex
$ \ReducedSymm k Y$, together with a homotopy equivalence
$r: \SymmPower kY \to \ReducedSymm k Y$.
It uses special properties of cells of dimensions one and two, and the
arguments we shall present do not seem to have an obvious direct generalization 
to higher dimensions.

We will mainly be interested in the special case where
$Y$ is a cell decomposition of a closed oriented  surface $\Sigma$ with one
0-cell, $2g$ 1-cells and one 2-cell. 
It is convenient to 
do the construction in two steps, first treating the 1-skeleton of the surface.
We will denote this 1-skeleton by $X$, so that
\[
X := \mathrm{Sk}^{1}Y \simeq \Sigma \setminus \{  {\rm pt} \}.
\]
Thus $X$ is a one-dimensional CW-complex
with one 0-cell. Chose the basepoint $v_0 \in X$ 
to be the point it represents.  

The 1-cells of $X$ come with characteristic maps
 $\phi_i:D^{1}_i\to X$.
We pick a basepoint in the boundary of each  $D^{1}$. All attaching maps are automatically basepoint-preserving. 
The $k$-fold product $X^k$ has a corresponding cell decomposition;
the cells are indexed by sequences $I=(i_1,\dots,i_k)$, where each
$i_j$ is the index of a 1-cell in $X$. Let $D^I=D^{k}$. 
The characteristic maps of the cells are just products
\[
\phi_I=\prod_{1\leq j\leq k}\phi_{i_j}: D^I\cong \prod_{1\leq j\leq k} D^{1}\to X^k.
\]
These characteristic maps are also basepoint preserving.
The $s$-skeleton of $X^k$ is built inductively by attaching $s$-cells to
the $(s-1)$-skeleton:
\[
{\rm Sk}^s(X^k)={\rm Sk}^{s-1} (X^{k})\cup_{\phi_I} (\cup_I D^I).
\]   
The action of the symmetric group $\SymmGroup k$ on $X^k$ preserves the $s$-skeleton. We obtain
a decomposition
\[
{\rm Sk}^s(X^k)/\SymmGroup k=(X^{k-1}/\SymmGroup k) \cup_\phi (\cup_I D^I/\SymmGroup k).
\]
A free orbit of cells in $X^k$ under $\SymmGroup k$ will contribute a single
cell to this pushout. More generally, if a cell in $X^k$ is built from  $l$ times the 0-cell, and $k-l$ \emph{distinct} 1-cells, its $\SymmGroup k$-orbit will consist of $\binom kl$ cells. Each open cell in this orbit will map 
homeomorphically to 
${\rm Sk}^s(X^k)/\SymmGroup k$.  The union of the corresponding closed cells forms a subspace
$\ReducedSymm k X\subset \SymmPower kX$, which has a CW-structure, with the cells we have just described. The inclusion of this
subspace is a  homotopy equivalence, and there exists a retraction (which 
is also a homotopy equivalence)
\[
r: \SymmPower kX \to \ReducedSymm kX.
\]

This completes the construction for a 1-dimensional CW-complex $X$ with a single 0-cell. The next step is to extend the construction to the case when the $CW$-complex also has 2-cells. Actually, as we have already said, in this paper we are only interested in the cell structure on a closed surface $\Sigma$, formed by gluing a single 2-cell onto a bouquet of circles, and we will specialize to this later.

We have to produce characteristic maps of cells. To generate such maps, we use {\em concatenation}, which is the obvious map
\begin{equation} \label{concat}
\mu_{k,l}:\SymmPower kX \times \SymmPower lX \to \SymmPower {k+l}X,
\end{equation}
and which we also write as $\mu(x,y)=x*y$.
As in the 1-dimensional case, we do not give the symmetric product
itself a cell structure, but we produce a cell complex which
is homotopy equivalent to  the symmetric product. 

Following \cite{KallelSalvatore}, given a 2-complex $Y$ whose 1-skeleton 
$X=\mathrm{Sk}^1Y$ is a CW complex with a single 0-cell, we define   
$\ReducedSymm kY$ as the quotient of $\SymmPower kY$ which for
$x, x^\prime\in \SymmPower lX$ and $y \in \SymmPower {k-l}Y$ identifies
$x * y$ with $x^\prime * y$, if $x$ and $x^\prime$ have the same image in
$\ReducedSymm lX$. The quotient map 
\begin{equation} \label{hmtp}
r:\SymmPower kY \to \ReducedSymm kY
\end{equation}
is a homotopy equivalence. In this case, there is no
corresponding inclusion of $\ReducedSymm kY$ into $\SymmPower kY$, so $r$ 
is not a deformation retraction to a subspace.

The  retraction  (\ref{hmtp}) is compatible with concatenation (\ref{concat}), 
so there are commutative diagrams
\[
\xymatrix{
\SymmPower kY \times \SymmPower lY\ar[d]^{(r,r)}\ar[r]^-{\mu_{k,l}}
& \SymmPower {k+l}Y\ar[d]^r\\
{\ReducedSymm kY\times \ReducedSymm lY} \ar[r]^-{\bar \mu_{k,l}}&\ReducedSymm {k,l}Y.\\
}
\]

The attaching maps of cells are given in the following fashion. The cells in 
$\ReducedSymm kX= \ReducedSymm k{{\mathrm{Sk}^1Y}}$ are products of distinct 1-cells 
\[
D^k \cong (D^1)^k \to X^k = (\SymmPower 1 X)^k \xrightarrow{\mu} \SymmPower k X
\xrightarrow{r} \ReducedSymm kX.
\]
There is also the 2-cell in $Y=X\cup e^2$ with characteristic map
$\gamma_2 : D^2 \to Y$. We can assume that $\gamma_2$ is basepoint-preserving. 
Using concatenation, it produces cells of all even dimensions 
\[
\gamma_{2j} : D^{2j} \cong (D^2)^j \xrightarrow{(\gamma_2)^j}
Y^k \xrightarrow{\mu} Y. 
\]
Finally, we can concatenate the cells in $\ReducedSymm{{k-j}}X$ with $\gamma_j$ for $1 \le j \le g$. This produces
a complete list of cells in a cell decomposition of $\ReducedSymm k Y$. Note that every
characteristic map is basepoint-preserving.
The homology of $\ReducedSymm k Y$ can be computed as the homology of the corresponding
cellular chain complex $C_*(\ReducedSymm k Y)$, whose chain group $C_n(\ReducedSymm k Y)$
has a basis indexed by the $n$-cells (see for instance Chapter 2.2 of \cite{Hatcher} for the
definition of chain complexes and a discussion of cellular homology).

We note that concatenation induces a commutative product  
\[
\overline{\mu_{k,l}}:
\ReducedSymm kY \times \ReducedSymm lY\to
\ReducedSymm {k+l}Y.
\]
In particular, multiplication with the basepoint
$v_0 \in \ReducedSymm kY$ 
induces stabilization maps 
$
i:\ReducedSymm kY\to
\ReducedSymm{k+l}Y.
$
 By the description of the 
cells of $\ReducedSymm k Y$, this map identifies
$\ReducedSymm kY$ with a subcomplex of $\ReducedSymm {k+l}Y$.
It follows that the induced map of chain groups
$i_*: C_*(\ReducedSymm k Y) \to  C_*(\ReducedSymm {k+l} Y)$
is a split injective map.

The multiplication passes to a graded commutative product of chain groups
\[
\overline{\mu_{k,l*}}:
C_m(\ReducedSymm k\Sigma)\otimes 
C_n(\ReducedSymm l\Sigma)\to
C_{m+n}(\ReducedSymm{k+l}\Sigma).
\]
Here, graded commutative means that 
$\mu_{k,l*}(a\otimes b) =(-1)^{mn}\mu_{l,k*}(b\otimes a)$.
Moreover, the differential in the cellular chain complex is a 
derivation with respect to this product. That is,  
if $a\in C_m(\ReducedSymm k Y)$ and
$b\in C_n (\ReducedSymm l Y)$, then
$\partial \mu_{k,l*}(a\otimes b)=
\mu_{k,l*}(\partial a\otimes b)+
(-1)^k \mu_{k,l*}(a\otimes \partial b)$.

This product exhibits $\colim{k} C_*(\ReducedSymm k Y)$ as a graded
commutative ring, containing $C_*(\ReducedSymm k Y)$ as a
direct summand. Note that this summand is not closed under ring multiplication.
Since the boundary map is a derivation,  
$\colim k C_*(\ReducedSymm k Y)$ is a differential graded algebra (DGA), i.e.\ a graded algebra
together with a chain complex structure whose differential preserves the algebraic structure (see
\cite{GelMan}, Chapter V.3).

\begin{example}
If $S^1$ is the CW-complex with one 0-cell and one 1-cell $e_1$,
then for any $k\geq 1$ the stabilization map 
\[
S^1=\ReducedSymm 1{S^1} \to \ReducedSymm k{S^1} 
\]  
is a homeomorphism. The ring $\lim_k C_*(\ReducedSymm k {S^1})$
equals the exterior algebra $\Lambda[e_1]=\ZZ[e_1]/e_1^2=0$. The differential is trivial,
and for $k\geq 1$ the inclusion map $C_*(\ReducedSymm k {S^1})\subset\colim k C_*(\ReducedSymm k {S^1})$ is an isomorphism.  
\end{example}
\begin{example}
\label{ex:1-skeleton}
Let $X$ be a 1-dimensional CW-complex with one 0-cell and 1-cells $e_1,\dots, e_n$. The inclusion of
the cells collectively define an isomorphism  
$\Lambda[e_1,\dots, e_n]\to \colim k C_*(\ReducedSymm kX)$. 
The differential is trivial, and under this isomorphism the
image of the inclusion $C_*(\ReducedSymm kX) \hookrightarrow \colim k C_*(\ReducedSymm kX)$ consists
of monomials without repetitions in the generators with $k$ or less factors. In particular, if
$k\geq n$, this inclusion is an isomorphism. 
\end{example}

We now specialize to the case we really are interested in. So let $Y$ be a cell decomposition 
of a surface $\Sigma_g$ of genus $g$, with one 0-cell $v_0$, $2g$ 1-cells and one 2-cell. 
We choose the 1-cells in sets of pairs $e_i$,$f_i$, $1\leq i\leq g$,
and let the 2-cell $D^2$  be attached by the word $\prod_{1\leq i\leq g}(e_if_ie_i^{-1}f_i^{-1})$.
We can assume that the loop $e_i$ represents $x_i$ (considered as class
in $H_1(\Sigma; \mathbb{Z})$) and that the loop $f_i$ represents $y_i$ (see Section~\ref{topSym}). 
We also chose a basepoint in the boundary of each standard
disk $D^n$ ($n=1,2$). 

Let $\gamma_2:D^2 \to Y \to \colim k C_*(\ReducedSymm kX)$ be the
characteristic map of the 2-cell. Taking powers of this map, and
dividing by the action of the symmetric group, we obtain 
maps
\[
\gamma_{2n}:D^{2n} \cong (D^2)^k \xrightarrow{\gamma_2^n} \ReducedSymm nX.
\] 
We abusively denote the corresponding cycles by $\gamma_{2n}$.

According to Example \ref{ex:1-skeleton}, the inclusion 
$X=\mathrm{Sk}^1(Y)\subset Y$ induces an inclusion
\[
\Lambda[e_1,\dots,e_g,f_1,\dots,f_g]\hookrightarrow
\colim k C_*(\ReducedSymm kX).
\]
This makes  $\colim k C_*(\ReducedSymm kY)$ into a free module over
$\Lambda[e_1,\dots,e_g,f_1,\dots,f_g]$ with generators
$\gamma_{2n}$, $n\geq 0$. The symmetrising map
\[
(D^{2})^k \to (D^{2})^k/\SymmGroup k \cong D^{2k}
\]
has degree $k!$. It follows that 
$(\gamma_2)^k=k!\gamma_{2k}$, and consequently that
\begin{equation}
\label{eq:gammaAlgebra}
\gamma_{2k}\gamma_{2l}= {{k+l}\choose k} \gamma_{2(k+l)}.
\end{equation}

The relation (\ref{eq:gammaAlgebra}) means that the characteristic map $\gamma_2$ of the 2-cell
generates a subalgebra in the cellular chain complex that is isomorphic to what is called a
{\em gamma-algebra}, and which we will denote by $\Gamma[\gamma_2]$ following general practice.
More concretely, 
as a group $\Gamma[\gamma_2]$ is the free Abelian group generated 
by the classes $\gamma_{2n}$ for $n\in \mathbb{N}_0$. This comes with a grading, generated 
additively from 
the prescription that  $\gamma_2$ has degree two; so $\gamma_{2n}$ has degree $2n$.  
The multiplication is
defined on generators exactly by the relation
(\ref{eq:gammaAlgebra}).

It follows from the definitions that
\[
\colim k C_*(\ReducedSymm kY)\cong \Lambda[e_1,\dots e_g,f_1,\dots,f_g]
\otimes \Gamma[\gamma_2].
\]

The differential on $e_1$ is trivial by naturality. Since
$\gamma_1$ is the top cell in $\Sigma_g$, the differential on $\gamma_1$ is zero.
It follows from the Leibniz rule that that 
$\partial (k!\gamma_k)=\partial (\gamma_1^k)
=k!(\gamma_1)^{k-1}\partial \gamma_1=0$. Since the chain groups are free 
groups, it follows that $\partial \gamma_k=0$, and thus that
the differential is trivial. If we introduce the grading
$\deg(e_i)=\deg(f_j)=1,\deg(\gamma_i)=i$,
the image
$C_*(\ReducedSymm kY)\subset\lim_k C_*(\ReducedSymm k Y)$
exactly consists of the classes in grading $\leq k$. 
 
Because the differential is trivial, the chain groups agree with the homology groups.
This completes the computation in \cite{KallelSalvatore} of the homology of
$\SymmPower k\Sigma$.

\subsection{The chain complex of the universal covering space.}

We now consider the universal cover of $\SymmPower kY$ for $k\geq 2$.
As before, fix  a standard basis  $\{x_1,\dots ,x_g,y_1,\dots y_g\} $ as in (\ref{symplbasis})
for
\[ 
\pi := \pi_1({\rm Sym}^k(\Sigma))\cong H_{1}(\Sigma;\mathbb{Z}).
\]
The cell decomposition of $\SymmPower kY$ studied in the previous subsection
lifts to a cell decomposition of 
the universal cover
\[
U_k(Y)=\widetilde{\ReducedSymm kY}
\]
 for $k\geq 2$.
Each cell of $\ReducedSymm kY$ lifts to a free $\FundGroup$-orbit of
cells in the universal cover. 
In particular, the basepoint $v_0\in \ReducedSymm kY$ is covered by 
a free $\pi$-orbit of  points.
Pick one of the points in this orbit to be the basepoint of $U_k(Y)$.   

For $k\in \{ 0,1 \}$ we similarly consider the covering
$U_k(Y)$ of
$\ReducedSymm kY$ corresponding to $\FundGroup$.
So $U_0(Y)$ equals $\FundGroup$ as a discrete set, and 
$U_1(Y)$ is the simply connected covering of
$Y$ determined by the kernel of the Hurewicz map. 
Note that there is a short exact
sequence of  monoids:
\[
\FundGroup \to \coprod_{k\geq 0}  U_k(Y) \to
\coprod_{k\geq 0} {\ReducedSymm kY}.
\]
We lift the basepoint to $0\in \FundGroup$. The cells of 
${\ReducedSymm kY}$ lift to cells of  
$U_n(Y)$. More precisely, the characteristic maps of the cells 
in ${\ReducedSymm kY}$ lifts to character maps of cells in $U_n(Y)$.

Recall that each cell of positive dimension contains the basepoint 
in its boundary, and there is a unique basepoint preserving lifting of
the characteristic map.  By abuse of notation, we use the
same notation for a based cell and its based lifting.
We obtain all cells in $U_n(Y)$ by 
translating these cells by the elements of $\FundGroup$. It follows that
we have a CW-structure on $U_n(\Sigma)$
with cells 
\[
\alpha *v_0^k*e_{i_1}*\dots * e_{i_l}*f_{j_1}*\dots f_{j_m}*\gamma_s,
\] 
where $\alpha\in \FundGroup$ and $k+j+m+s=n$. 
The cellular chain groups are
\[
\ZZ[x_1^{\pm 1},\dots x_g^{\pm 1}, y_1^{\pm 1},\dots y_g^{\pm 1}]\otimes
\Lambda[e_1,\dots,e_g,f_1,\dots,f_g]\otimes \Gamma[\gamma].
\]
The differentials are derivations. Since $e_i$ is the path
representing $x_i$, we have that $\partial e_i=x_i-1$, and similarly
$\partial f_i=y_i-1$. The differential of $\gamma_1$ is given by tracing the boundary
of the lifted cell. For concreteness, we draw the lifted cell 
for the particular case of a genus two surface including the names of all cells in its boundary.
The zero cell $gv_0$ is denoted by the corresponding group element $g$: 
\[
\xymatrix{
&y_1\ar[ld]_{y_1e_1}&1\ar[l]_{f_1}\ar[rd]^{e_2}&\\
x_1y_1&&&x_2\ar[d]^{x_2f_2}\\
x_1\ar[u]^{x_1f_1}&&&x_2y_2\\
&1\ar[lu]^{e_1}\ar[r]_{f_2}&y_2\ar[ru]_{y_2e_2}&\\
}
\] 

We obtain that 
\[
\partial \gamma_1=\sum_i(e_i + x_if_i-y_ie_i-f_i)=
\sum_i((1-y_i)e_i-(1-x_i)f_i),
\]
and since $\partial$ is a derivation, that 
\[
\partial\gamma_s=\sum_i((1-y_i)e_i-(1-x_i)f_i)\gamma_{s-1}.
\]
Let us write $\lambda:=\sum_i((1-y_i)e_i-(1-x_i)f_i)$, so that this equation simplifies to
\begin{equation}
\label{eq:boundary}
\partial\gamma_s=\lambda\gamma_{s-1}.
\end{equation}

\section{Homology computations}
\label{sec:Homology}

In this section and the next, $X_n$ will denote a one-dimensional CW-complex with $n$ one-cells;
$Y_g$ will denote a two-dimensional complex $Y_g:=X_{2g}\cup e_{2}$ such that $Y_g$  
is homotopy equivalent to the closed Riemann surface of genus $g$ that we call $\Sigma_g$. That is, $Y_{g}$ is a choice of a CW-structure on
$\Sigma_g$.

The universal cover of the symmetric product
$\SymmPower k\Sigma$ which we denote $\ReducedSymm k{\Sigma_g}$, 
is homotopy equivalent to  the universal cover $\ReducedSymm k{Y_g}$.
Restricting this universal cover to the  symmetric product of the
one skeleton $X_{2g}$ of $Y_g$, we obtain a (not universal) covering space $U_k(X_{2g})$ of
$\ReducedSymm k{X_{2g}}$. The total space of this covering has a CW structure as
a subcomplex of $U_k(Y_{g})$. 
In Theorem \ref{th:chain_complex} we used this CW structure 
to determine a chain complex whose
homology equals the homology of $U_k(Y_{g})$. In this section, we shall
compute the homology of that chain complex.

The strategy of the computation
will be as follows. We first compute the homology of the $U_{k}(X_{2g})$;
this is done in Lemma \ref{le:torus}. Then we relate
$H_{*}(U_{k}(Y_{g}))$ to the homology of $U_{k}(X_{2g})$ in Lemma
\ref{le:reduction_to_torus}. In order to complete the calculation,
we have to do some homological algebra. The main step of this
homological calculation is collected in Lemma \ref{le:Cohomology}, before
stating and proving our main Theorem \ref{th:main}.
(This calculation can also be formulated using spectral sequences, but
we prefer to keep the formal machinery as elementary as we can.)

Let $X=X_{n}$ be a wedge of $n$ copies of the circle.
The cellular homology complex corresponding to the cell decomposition
of the symmetric product that  
we discussed in Section \ref{sec:CW}
can, as explained there, be identified with the DGA
\[
\Chains_{*}(X)= \ZZ[z_{1}^{\pm 1},\dots, z_{n}^{\pm 1}]\otimes \Lambda[\psi_1,\dots,\psi_{n}],\qquad 
\partial \psi_{i}=z_{i}-1.
\]  
Here $\ZZ[z_{1}^\pm,\dots z_{n}^\pm]=\ZZ[\FundGroup]$, and $\Lambda[\psi_1,\dots,\psi_{n}]$
is the exterior algebra on the generators $\psi_{i}$, each of degree 1. 

This chain complex computes the homology of 
$\colim k C_*(U_k(X_{n}))$. To obtain the homology of 
$C_*(U_{k}(X_{n}))$, we have to take the subcomplex corresponding to
classes of grading $\leq k$. Fortunately, the grading agrees with the
internal grading, so the subcomplex  $\Chains_{*}(U_k(X_{n}))\subset \Chains_*(X_n)$ is the truncated complex
\[
\Chains_{k}(X_{n})\xrightarrow{\partial_{k}} \dots \Chains_{1}(X_{n})
\xrightarrow{\partial_{1}} \Chains_{0}(X_{n}).
\] 
We will use this complex to compute the homology of $U_k(X_{n})$.

\begin{definition}
 $\Cycles_{k}:= \ker\{\partial_{k}: \Chains_{k}(X_n)\to \Chains_{k-1}(X_n)\}$.
\end{definition}

\begin{remark}
\label{rem:chains}
If $0 \leq k < n$, $\Cycles_{k}$ is not finitely generated as an Abelian
group. 
In fact, $\Chains_k(X_{n})$ contains no $\ZZ[\FundGroup]$-torsion, so that 
any nontrivial element of $\Cycles_k\subset \Chains_k(X_{n})$ 
generates a free  $\ZZ[\FundGroup]$-submodule of rank one.
As an Abelian group, this submodule is free on infinitely many generators.
An example of a nontrivial element in $\Cycles_k$ for $0\leq k<n$ 
is $\partial(\psi_1\psi_2\dots \psi_{k+1})$.
\end{remark}

\begin{lemma}
\label{le:torus}
Assume that $k\geq 2$.
\[
H_i(U_k(X_{n}))=
\begin{cases}
\ZZ &\text{if $i=0$,}\\
\Cycles_k&\text{if $0<i=k\leq n$,}\\
0&\text{else.} 
\end{cases}
\]
\begin{proof}
The untruncated chain complex
\[
0\to \Chains_{n}(X_{n})\xrightarrow{\partial_{k}} \dots 
\Chains_{1}(X_{n})\xrightarrow{\partial_{1}} \Chains_{0}(X_{n})\to 0
\] 
is the tensor product of $n$ copies of the chain complex
\[
\ZZ[z^{\pm 1}]\otimes \Lambda[\psi], \qquad \partial \psi=z-1
\]

The homology of this chain complex is $\ZZ$, generated by the class 
1 in dimension 0, so by K\"u{}nneth's formula (see \cite{MacLane} Chapter V, Theorem 10.1), the homology of $\Chains_{*}(X_{n})$ also consists of
$\ZZ$ in dimension 0. The lemma follows on comparing this with the
homology of the truncated chain complex.  
\end{proof}
\end{lemma}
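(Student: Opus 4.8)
The plan is to read the homology of $U_k(X_n)$ straight off the algebraic model for its cellular chain complex furnished by Theorem~\ref{th:chain_complex}, so that the whole computation reduces to a Koszul-type calculation over the group ring $\ZZ[\FundGroup]=\ZZ[z_1^{\pm1},\dots,z_n^{\pm1}]$ followed by a truncation argument.

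First I would observe that the untruncated complex $\Chains_*(X_n)=\ZZ[z_1^{\pm1},\dots,z_n^{\pm1}]\otimes\Lambda[e_1,\dots,e_n]$ decomposes, as a complex of abelian groups, into the tensor product of the $n$ two-term complexes $K^{(i)}\colon\ \ZZ[z_i^{\pm1}]\cdot e_i\xrightarrow{\ z_i-1\ }\ZZ[z_i^{\pm1}]$ placed in degrees $1$ and $0$; this is exactly what it means for $\partial$ to be a derivation with $\partial e_i=z_i-1$. Since $\ZZ[z^{\pm1}]$ is an integral domain, multiplication by $z-1$ is injective, with cokernel $\ZZ[z^{\pm1}]/(z-1)\cong\ZZ$, so each $K^{(i)}$ has homology $\ZZ$ in degree $0$ and nothing else. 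Every group in sight is free abelian, hence flat, so the K\"unneth theorem carries no correction terms and yields $H_*(\Chains_*(X_n))\cong\ZZ$, concentrated in degree $0$ and generated by the class of $1$.

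Next I would compare this with the truncated complex $\Chains_*^k(X_n)$, which by Theorem~\ref{th:chain_complex} is the one computing $H_*(U_k(X_n))$: it is obtained from $\Chains_*(X_n)$ by discarding all chains of degree strictly greater than $k$ while retaining the same boundary maps in degrees $\le k$. Hence $H_i(U_k(X_n))=H_i(\Chains_*(X_n))$ for $i<k$, which is $\ZZ$ when $i=0$ and $0$ when $0<i<k$; the chain groups vanish for $i>k$, so the homology vanishes there too; and in the top degree $i=k$ the image of $\partial_{k+1}$ is no longer present, so $H_k(U_k(X_n))=\ker\partial_k=\Cycles_k$. If $k>n$ then $\Chains_k(X_n)=0$, so the middle case is vacuous (and $\Cycles_k=0$), consistent with the range $0<i=k\le n$ in the statement; for $1\le k\le n$ the top homology is exactly $\Cycles_k$. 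Collecting these three ranges gives the formula.

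I do not anticipate a genuine obstacle: the argument is routine Koszul-plus-truncation bookkeeping. The two small points worth keeping straight are that it is precisely the derivation property of $\partial$ that licenses the tensor-product decomposition (and hence the clean application of K\"unneth), and that the top-degree homology of the truncated complex captures all of $\Cycles_k$ exactly because nothing maps in from degree $k+1$. The same computation could be repackaged via the spectral sequence of the grading $\mathcal F$ on $\Chains_*(X_n)$, but for a wedge of circles that is heavier machinery than the problem warrants.
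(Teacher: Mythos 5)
Your proposal is correct and follows essentially the same route as the paper: decompose the untruncated complex $\Chains_*(X_n)$ as a tensor product of $n$ Koszul-type two-term complexes $\ZZ[z^{\pm 1}]\otimes\Lambda[e]$ with $\partial e = z-1$, apply the K\"unneth formula to conclude the homology is $\ZZ$ concentrated in degree $0$, and then compare with the truncated complex computing $H_*(U_k(X_n))$. Your only addition is to spell out explicitly the truncation bookkeeping (agreement in degrees $<k$, vanishing above $k$, and $H_k=\ker\partial_k=\Cycles_k$ because no boundaries arrive from degree $k+1$), which the paper leaves as a one-line remark.
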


The cellular chain complex $\Chains^k_*(Y_{g})$ of 
$U_{k}(Y_{g})$ is the subcomplex of

\[
 \Gamma[\gamma_2] \otimes 
\ZZ[x_1^{\pm 1},\dots x_g^{\pm 1}, y_1^{\pm 1},\dots y_g^{\pm 1}]\otimes
\Lambda[e_1,\dots,e_g,f_1,\dots,f_g]=
 \Gamma[\gamma_2] \otimes \Chains_*(X_{2g})
\]
generated by cells of grading $\leq k$, where the grading is
multiplicative on the cells, the gradings of $x_{i}$ and $y_{i}$ are zero,
the gradings of $e_{i}$ and $f_{i}$ are 1, and the grading of each $\gamma_{2i}$
is $i$. This time, the grading does not agree with the internal grading
because the internal grading of $\gamma_{2i}$ is $2i$, not $i$.

There is a natural chain of CW-complexes
\[
\{ {\rm pt} \} =U_0(\Sigma_g)\subset U_{1}(Y_{g})\subset 
U_{2}(Y_{g})\subset
\dots \subset 
U_{k}(Y_{g})\subset \dots 
\]
Each inclusion is the inclusion of a subcomplex. 
There is a corresponding sequence of cellular chain complexes.
Let $Q_{k}=\Chains_{*}U_{k}(Y_{g})/\Chains_{*}U_{k-1}(Y_{g})$
be the quotient chain complex. We consider $\Chains_*(X_{2g})$ as a subcomplex of $\Chains_*(Y_g)$, using the identifications $x_i=\psi_i$, $y_j=\psi_{g+j}$ for $1\leq i,j \leq g$.

\begin{lemma}
\label{le:Q}
The homology of the chain complex $Q_{k}$ is concentrated in
one dimension.
More precisely
\[
H_{i}(Q_{k})\cong
\begin{cases}
\Chains_{k}(X_{2g})/\lambda\Chains_{k}(X_{2g}) &\text{if $k\leq 2g$ and $i=k$,} \\
\ZZ \text{ generated by $\gamma_{k-2g}$}&\text{if $k\geq 2g$ and $i=2k-2g$},\\
0 &\text{else.}
\end{cases}
\]
\begin{proof}
The chain complex $Q_{k}$ is given by
\[
0\xrightarrow{}
\gamma_{k}\Chains_{0}(X_{2g})\xrightarrow{\partial}
\gamma_{k-1}\Chains_{1}(X_{2g})\xrightarrow{\partial}\dots
\xrightarrow{\partial} 
\gamma_{1}\Chains_{k-1}(X_{2g})\xrightarrow{\partial}
\Chains_{k}(X_{2g})\xrightarrow {}0.
\]
Because $\partial \gamma_{i}=\lambda \gamma_{i-1}$, this chain complex
is isomorphic to the chain complex
\begin{equation}
\label{eq:truncated}
0 \xrightarrow{}
\Chains_{0}(X_{2g})\xrightarrow{\lambda}
\Chains_{1}(X_{2g})\xrightarrow{\lambda}\dots
\xrightarrow{\lambda} 
\Chains_{k-1}(X_{2g})\xrightarrow{\lambda}
\Chains_{k}(X_{2g})\xrightarrow {}0.
\end{equation}
We first consider the corresponding complex
\[
0\xrightarrow{}
{\Chains}_{0}(X_{2g}) \xrightarrow{\lambda}
{\Chains}_{1}(X_{2g}) \xrightarrow{\lambda}
\dots
{\Chains}_{2g-1}(X_{2g}) \xrightarrow{\lambda}
{\Chains}_{2g}(X_{2g})\xrightarrow{}
0,
\]
and claim that its cohomology
is 0 except in degree $2g$, where the cohomology is $\ZZ$.

We can write the complex ${\Chains}_{*}(X_{2g})$ as a tensor product
\[
\otimes_{1\leq i\leq g}(\ZZ[x_i^{\pm 1}]\otimes \Lambda[f_i])
\otimes_{1\leq j\leq g}(\ZZ[y_j]^{\pm 1}\otimes \Lambda[e_j])
\]
where the boundary maps are multiplication by $-(1-x_i)f_{i}$ in
$\ZZ[x_i^{\pm 1}]\otimes \Lambda[f_i]$, respectively multiplication by
$(1-y_{j})e_{j}$ in $\ZZ[y_j^{\pm 1}]\otimes \Lambda[e_j]$. The cohomology of
$\ZZ[x_i]^{\pm 1}\otimes \Lambda[f_i]$ is a copy of $\ZZ$ generated by
$f_i$, and the cohomology of 
$\ZZ[y_i]^{\pm 1}\otimes \Lambda[e_i]$ is a copy of $\ZZ$ generated by
$e_i$; actually, there is a short exact sequence
\begin{align*}
0 \to \ZZ[x_i^{\pm 1}]\xrightarrow{(x_i-1)f_i}
f_i\ZZ[x_i^{\pm 1}]&\to\ZZ\to 0.\\
f_i\sum_k{n_k x_i^k}&\mapsto \sum_k{n_k}\\
\end{align*}
It follows from the K\"u{}nneth formula that the cohomology of
${\Chains}_{*}$ is a single copy of $\ZZ$ generated by 
$(\prod_{1\leq j\leq  g}{e_j})(\prod_{1\leq i\leq g}f_i)$.

We now return to the truncated chain complex (\ref{eq:truncated}).
This chain complex is exact, except at $\Chains_{k}(X_{2g})$ if
$k<2g$ and at $\Chains_{2g}(X_{2g})$ if $k \geq 2g$.
The lemma follows.
\end{proof}
\end{lemma}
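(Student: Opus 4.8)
The plan is to identify the quotient chain complex $Q_k$ with a truncation of a Koszul-type complex built from $\Chains_*(X_{2g})$, and then to compute the homology of that truncation by a K\"u{}nneth decomposition.

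First I would make $Q_k$ explicit. By Theorem~\ref{th:chain_complex}, $\Chains_*(U_k(Y_g))$ is the subcomplex of the DGA $\Gamma[\gamma_2]\otimes\Chains_*(X_{2g})$ spanned by the monomials of $\mathcal F$-grading at most $k$, where the generators $e_i,f_i$ have $\mathcal F$-grading $1$ and the basis element $\gamma_j\in\Gamma[\gamma_2]$ has $\mathcal F$-grading $j$ and topological degree $2j$. Hence $Q_k=\Chains_*(U_k(Y_g))/\Chains_*(U_{k-1}(Y_g))$ has as a basis the monomials of $\mathcal F$-grading exactly $k$, namely the $\gamma_j\,c$ with $c\in\Chains_{k-j}(X_{2g})$ and $0\le j\le k$; such a monomial lies in topological degree $2j+(k-j)=k+j$. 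The key point is that, when $\partial$ is applied and one passes to the quotient, the Leibniz term $\pm\gamma_j\,\partial c$ vanishes (it has $\mathcal F$-grading only $k-1$), so that only $(\partial\gamma_j)\,c=\lambda\gamma_{j-1}\,c$ survives, where $\lambda=\sum_i\bigl((1-y_i)e_i-(1-x_i)f_i\bigr)$ as in~(\ref{eq:boundary}). Thus $Q_k$ is the complex
\[
0\to\gamma_k\Chains_0(X_{2g})\xrightarrow{\partial}\cdots\xrightarrow{\partial}\gamma_1\Chains_{k-1}(X_{2g})\xrightarrow{\partial}\Chains_k(X_{2g})\to0,
\]
and, using $\partial\gamma_j=\lambda\gamma_{j-1}$, one sees that $Q_k$ is isomorphic as a chain complex to the truncation
\[
0\to\Chains_0(X_{2g})\xrightarrow{\lambda}\Chains_1(X_{2g})\xrightarrow{\lambda}\cdots\xrightarrow{\lambda}\Chains_k(X_{2g})\to0,
\]
in which the term $\Chains_i(X_{2g})$ sits in topological degree $2k-i$.

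Next I would compute the homology of the \emph{untruncated} complex $0\to\Chains_0(X_{2g})\xrightarrow{\lambda}\cdots\xrightarrow{\lambda}\Chains_{2g}(X_{2g})\to0$. Since multiplication by $\lambda$ is the tensor-product differential for the factorisation
\[
\Chains_*(X_{2g})=\bigotimes_{i=1}^{g}\bigl(\ZZ[x_i^{\pm1}]\otimes\Lambda[f_i]\bigr)\;\otimes\;\bigotimes_{j=1}^{g}\bigl(\ZZ[y_j^{\pm1}]\otimes\Lambda[e_j]\bigr),
\]
this complex is the $\ZZ$-tensor product of the $2g$ two-term complexes of the shape $\ZZ[t^{\pm1}]\xrightarrow{t-1}\ZZ[t^{\pm1}]$. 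Each of these has vanishing homology in degree $0$ (since $\ZZ[t^{\pm1}]$ is an integral domain and $t-1\ne0$) and homology $\ZZ[t^{\pm1}]/(t-1)\cong\ZZ$ in degree $1$ (via the augmentation $t\mapsto1$). By the K\"u{}nneth theorem the untruncated complex is therefore exact except in its top degree $2g$, where the homology is a single copy of $\ZZ$ generated by $e_1\cdots e_gf_1\cdots f_g$.

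Finally I would read off $H_*(Q_k)$ by comparing the truncation with this computation. If $k<2g$, the truncation agrees with the untruncated complex in positions $0,\dots,k-1$, hence is exact there, so its only homology is the cokernel of the last differential, $\Chains_k(X_{2g})/\lambda\Chains_k(X_{2g})$, sitting in topological degree $k$. If $k\ge2g$, the truncation is exact below $\Chains_{2g}(X_{2g})$ and vanishes above it, so its only homology is the copy of $\ZZ$ at $\Chains_{2g}(X_{2g})$, i.e.\ in topological degree $2k-2g$, generated by $\gamma_{k-2g}\,e_1\cdots e_gf_1\cdots f_g$ (abbreviated $\gamma_{k-2g}$ in the statement); the two descriptions agree at $k=2g$. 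I expect the only real difficulty to be bookkeeping: keeping the multiplicative $\mathcal F$-grading apart from the topological degree, checking that the induced differential on $Q_k$ is exactly multiplication by $\lambda$, and correctly matching positions in the truncated complex with topological degrees in $Q_k$; the homological input --- the tensor decomposition and K\"u{}nneth --- is entirely routine.
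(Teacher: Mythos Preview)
Your proposal is correct and follows essentially the same approach as the paper: identify $Q_k$ with the $\lambda$-truncated complex, compute the untruncated version via the tensor decomposition and K\"unneth, then read off the truncation. You are in fact a bit more explicit than the paper in one useful place: you spell out why the induced differential on $Q_k$ is pure multiplication by $\lambda$ (the Leibniz term $\gamma_j\,\partial c$ drops to $\mathcal F$-grading $k-1$ and hence dies in the quotient), whereas the paper simply asserts the form of $Q_k$.
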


\begin{lemma}
\label{le:reduction_to_torus}
Let 
\[
\partial_{k}:\Chains_{k}(X_{2g})/\lambda\Chains_{k-1}(X_{2g})
\to
\Chains_{k-1}(X_{2g})/\lambda\Chains_{k-2}(X_{2g})
\]
be the boundary map.  
\[
H_{i}(U_{k}(Y_{g}))\cong
\begin{cases}
\ZZ & \text{if $i$ is even, $k > 2g$  and $i\leq 2k-2g$.}\\
\ker \partial_{k}& \text{ if $i=k$ and $k\leq 2g$,}\\
\ker \partial_{i}/ \im \,\partial_{i-1}& \text{if $i<k\leq 2g$}\\
0&\text{ else.} 
\end{cases}
\] 
\begin{proof}
The long exact sequence
\[
\dots\to
H_i(\Chains_{*}U_{k-1}(Y_{g})) \to
H_i(\Chains_{*}U_{k}(Y_{g}))\to
H_{i}(Q_{k}) \to
\dots
\]
breaks up into shorter exact sequences, since $H_{i}(Q_{k})$ is only nontrivial in a single
dimension. We need to consider several cases separately.

 When $k\leq 2g$, we have isomorphisms
\begin{equation}
\label{eq:iso}
H_i(\Chains_{*}U_{k-1}(Y_{g})) \to
H_i(\Chains_{*}U_{k}(Y_{g})), \quad i\not=k-1,k,
\end{equation}
together with the exact sequence
\begin{equation}
\label{eq:exact}
\xymatrix{
0 \ar[r] & 
H_k(\Chains_{*}U_{k-1}(Y_{g}))\ar[r] &
H_k(\Chains_{*}U_{k}(Y_{g}))\ar[r]&
H_{k}(Q_{k})\ar  `r[d] `[ll] `^d[dlll]`^r[d] [dll] \\
&
H_{k-1}(\Chains_{*}U_{k-1}(Y_{g}))\ar[r] &
H_{k-1}(\Chains_{*}U_{k}(Y_{g}))\ar[r]&  0.
}
\end{equation}

\noindent
(I.1)
\emph{The case $k\leq 2g$ and $i >k$:}
If $i>2k$, then $\Chains_i(U_k(Y_g))=0$, so certainly 
$H_i(\Chains_*U_k(Y_g))=0$.
It follows by downwards induction from this and from (\ref{eq:iso}) 
that if $k\leq 2g$ and $i>k$, then 
$H_i(\Chains_{*}U_{k}(Y_{g}))=0$.

\noindent
(I.2)
\emph{The case $k\leq 2g$ and $i =k$:}
It follows from case (I.1) that $H_k(\Chains_{*}U_{k-1}(Y_{g}))\cong 0$, so that 
 by~(\ref{eq:exact}) the group
$H_k(\Chains_*U_k(Y_g))$ is isomorphic to the kernel of the map
$H_k(Q_k)\to H_{k-1}(\Chains_*U_{k-1}(Y_g))$.

Moreover, it also follows from~(\ref{eq:exact}) that
$H_k(\Chains_{*}U_{k}(Y_{g}))\to H_k(Q_k)$ and similarly 
 $H_{k-1}(\Chains_{*}U_{k-1}(Y_{g}))\to H_{k-1}(Q_{k-1})$
are injections. Putting these fact together, we obtain that 
if $k\leq 2g$, then 
$H_k(\Chains_{*}U_{k}(Y_{g}))$ is isomorphic
to the kernel of the composite map
\[
\partial_{k}:H_{k}(Q_{k}) \to H_{k-1}(\Chains_{*}U_{k-1}(Y_{g})) 
\hookrightarrow H_{k-1}(Q_{k-1}).
\]
Using lemma~\ref{le:Q}  completes the proof of the lemma for these values of $i,g$ and $k$.

\noindent
(I.3)
\emph{The case $k \leq 2g$ and $i<k$:}
Because of (\ref{eq:iso}) it is sufficient to
compute $H_{k-1}(\Chains_{*}U_{k}(Y_{g}))$. This 
can be read off from the following diagram with exact rows:
\[
\xymatrix{
H_{k}(Q_{k})\ar[r]\ar@{=}[d]&H_{k-1}(\Chains_{*}U_{k-1}(Y_{g}))\ar[r]\ar[d]^{\cong}
&H_{k-1}(\Chains_{*}U_{k}(Y_g))\ar[r]\ar[d]^{\cong}&0\\
H_{k}(Q_{k})\ar[r]\ar[rd]^{\partial_{k}}&\ker \partial_{k-1}\ar[r]\ar@{^{(}->}[d]&\ker \partial_{k-1}/\im
\, \partial_{k}\ar[r]& 0\\
&H_{k-1}(Q_{k-1}).&&
}
\]

Now we turn to the cases when $k>2g$. The lemma claims that under this assumption
\[
H_i(U_k(Y_g))=
\begin{cases}
\ZZ &\text{ if $i$ is even and $i\leq 2k-2g$}\\
0 &\text{ else.}
\end{cases}
\]
Using lemma~\ref{le:Q} and the short exact sequence of chain complexes 
$0\to \Chains_*U_{k-1}(Y_g)\to \Chains_*U_k(Y_g)\to Q_k\to 0$ we see that
for $k > 2g$, we have isomorphisms
\begin{equation}
\label{eq:highiso}
H_{i}(\Chains_{*}U_{k-1}(Y_{g}))\cong H_{i}(\Chains_{*}U_{k}(Y_{g})), \quad i\not=2k-2g-1,2k-2g.
\end{equation}

\noindent
(II.1)
\emph{The case $k>2g$ and $i \geq 2k-2g+1$:} 
In this case, $i>2g$. It follows from case (I.1) above that 
$H_{i}(\Chains_{*}U_{2g}(Y_{g}))\cong 0$, so we obtain from
(\ref{eq:highiso}) that
\[
0 \cong H_{i}(\Chains_{*}U_{2g}(Y_{g})) \cong
H_{i}(\Chains_{*}U_{2g+1}(Y_{g}))\cong \dots 
\cong H_{i}(\Chains_{*}U_{k}(Y_{g})),
\]
and consequently $H_{i}(\Chains_{*}U_{k}(Y_{g}))\cong 0$. 

\noindent
(II.2)
\emph{ The case $k>2g$ and $2k-2g-1\leq i\leq 2k-2g$:} 
We know from case (II.1) that
\begin{align*}
H_{2k-2g}((\Chains_{*}U_{k-1}(Y_{g})))&\cong 0,\\
H_{2k-2g-1}((\Chains_{*}U_{k-1}(Y_{g})))&\cong 0.\\
\end{align*}
From the short exact sequence 
$0\to \Chains_*U_{k-1}(Y_g)\to \Chains_*U_k(Y_g)\to Q_k\to 0$
together with lemma~(\ref{le:Q}), we conclude
\begin{align*}
H_{2k-2g}((\Chains_{*}U_{k}(Y_{g})))&\cong H_{2k-2g}(Q_{k})\cong \ZZ,\\
H_{2k-2g-1}((\Chains_{*}U_{k}(Y_{g})))&\cong 0.\\
\end{align*}

\noindent
(II.3)
\emph{ The case $k>2g$, $i\leq 2k-2g-2$:} This follows from 
case II.2 together with the isomorphisms~(\ref{eq:highiso}).
\end{proof}
\end{lemma}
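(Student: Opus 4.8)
The plan is to build the universal cover $U_k(Y_g)\simeq\widetilde{\SymmPower k{\Sigma_g}}$ up one grading at a time and to read off its homology inductively from the homology of the layers, which has already been pinned down in Lemma~\ref{le:Q}. Concretely, I would exploit the filtration $\Chains_*U_0(Y_g)\subseteq\Chains_*U_1(Y_g)\subseteq\cdots\subseteq\Chains_*U_k(Y_g)\subseteq\cdots$ by subcomplexes from Section~\ref{sec:CW}, together with the short exact sequences $0\to\Chains_*U_{k-1}(Y_g)\to\Chains_*U_k(Y_g)\to Q_k\to 0$, where $Q_k=\Chains_*U_k(Y_g)/\Chains_*U_{k-1}(Y_g)$. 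The decisive input is Lemma~\ref{le:Q}: the homology of $Q_k$ is concentrated in a single degree $d_k$, equal to $\Chains_k(X_{2g})/\lambda\Chains_{k-1}(X_{2g})$ in degree $d_k=k$ when $k\le 2g$, and equal to $\ZZ$ in degree $d_k=2k-2g$ when $k\ge 2g$. Feeding this into the long exact homology sequence of the short exact sequence above makes it collapse: one gets isomorphisms $H_i(U_{k-1}(Y_g))\xrightarrow{\cong}H_i(U_k(Y_g))$ for all $i\notin\{d_k-1,d_k\}$, together with a single exact strand
\[
0\longrightarrow H_{d_k}(U_{k-1}(Y_g))\longrightarrow H_{d_k}(U_k(Y_g))\longrightarrow H_{d_k}(Q_k)\xrightarrow{\delta_k}H_{d_k-1}(U_{k-1}(Y_g))\longrightarrow H_{d_k-1}(U_k(Y_g))\longrightarrow 0,
\]
governed by the connecting homomorphism $\delta_k$.

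Next I would run an induction on $k$, the base being the trivial complex $\Chains_*U_0(Y_g)=\ZZ[\pi]$ concentrated in degree $0$. In the regime $k\le 2g$ one first shows, using the collapsed sequence and the inductive hypothesis, that $H_i(U_k(Y_g))=0$ for every $i>k$; in particular $H_k(U_{k-1}(Y_g))=0$, so the exact strand degenerates to $0\to H_k(U_k(Y_g))\to\Chains_k(X_{2g})/\lambda\Chains_{k-1}(X_{2g})\xrightarrow{\delta_k}H_{k-1}(U_{k-1}(Y_g))\to H_{k-1}(U_k(Y_g))\to 0$. The inductive hypothesis also places $H_{k-1}(U_{k-1}(Y_g))$ canonically inside $H_{k-1}(Q_{k-1})=\Chains_{k-1}(X_{2g})/\lambda\Chains_{k-2}(X_{2g})$, and the core of the argument is to verify that $\delta_k$, composed with that inclusion, is exactly the operator $\partial_k$ of the lemma --- the original differential of $\Chains_*(X_{2g})$ descended to the quotient complex $\Chains_*(X_{2g})/\lambda\Chains_{*-1}(X_{2g})$, this descent being legitimate precisely because $\lambda$ is a cycle, $\partial\lambda=0$. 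Granting that, injectivity of the inclusion identifies $H_k(U_k(Y_g))$ with $\ker\partial_k$ and $H_{k-1}(U_k(Y_g))$ with the homology of the descended complex in degree $k-1$, while all lower groups are transported unchanged from $U_{k-1}(Y_g)$; unwinding the induction gives exactly the asserted description of $H_*(U_k(Y_g))$ for $k\le 2g$, the degree-zero end automatically producing $H_0(U_k(Y_g))\cong\ZZ$.

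For $k>2g$ the collapsed sequence instead lives in degrees $2k-2g$ and $2k-2g-1$, with $H_{2k-2g}(Q_k)\cong\ZZ$. Running the same induction from $k=2g$ upward, I would note that both $2k-2g$ and $2k-2g-1$ exceed $2(k-1)-2g$, so by the inductive hypothesis $H_{2k-2g}(U_{k-1}(Y_g))=0$ (it lies above the top nonzero degree of $U_{k-1}(Y_g)$) and $H_{2k-2g-1}(U_{k-1}(Y_g))=0$ (odd degree); hence $\delta_k=0$ and the strand yields $H_{2k-2g}(U_k(Y_g))\cong\ZZ$, $H_{2k-2g-1}(U_k(Y_g))=0$, everything else stable. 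Transporting the isomorphisms back down to $U_{2g}(Y_g)$ in degrees $\le 2g$ then gives the full pattern; here the needed values of $H_*(U_{2g}(Y_g))$ in low degrees ($\ZZ$ in even degrees, $0$ in odd degrees) come from the forthcoming computation of the homology of $\Chains_*(X_{2g})/\lambda\Chains_{*-1}(X_{2g})$ --- alternatively, the whole range $k>2g$ is already covered by the Abel--Jacobi description in Section~\ref{topSym}, which for $k>2g-2$ exhibits $\widetilde{\SymmPower k{\Sigma_g}}$ as a $\CP^{k-g}$-bundle over a contractible base and hence as homotopy equivalent to $\CP^{k-g}$.

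The step I expect to be the real obstacle is the explicit identification of $\delta_k$ with $\partial_k$: it is a snake-lemma computation on the cell bases, in which one represents a class in $H_{d_k}(Q_k)=\Chains_{d_k}(X_{2g})/\lambda\Chains_{d_k-1}(X_{2g})$ by a $\gamma$-free chain $c\in\Chains_{d_k}(X_{2g})\subset\Chains_*U_k(Y_g)$ --- which already serves as a lift --- and then checks that its boundary $\partial c$, read inside the split subcomplex $\Chains_*U_{k-1}(Y_g)$ and projected to $Q_{k-1}$, is the class of $\partial c\bmod\lambda\Chains_{d_k-2}(X_{2g})$; keeping the relation $\partial\gamma_s=\lambda\gamma_{s-1}$ (which is what makes each $Q_k$ into a $\lambda$-complex) straight in this bookkeeping is the delicate part. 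A second, milder annoyance is matching the two regimes at $k=2g$, where $d_k=k=2g$ and the descriptions ``$H_{2g}(U_{2g}(Y_g))\cong\ker\partial_{2g}$'' and ``start of the stable $\ZZ$-pattern'' must be reconciled --- but this is automatic once the homology of the descended complex is known through degree $2g$.
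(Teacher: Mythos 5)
Your proposal is correct and follows essentially the same route as the paper: the filtration of $\Chains_{*}U_{k}(Y_{g})$ by the subcomplexes $\Chains_{*}U_{k-1}(Y_{g})$, the concentration of $H_{*}(Q_{k})$ in a single degree from Lemma~\ref{le:Q}, the resulting collapse of the long exact sequence, and the identification of the connecting homomorphism (composed with the inclusion $H_{k-1}(\Chains_{*}U_{k-1}(Y_{g}))\hookrightarrow H_{k-1}(Q_{k-1})$) with the descended boundary $\partial_{k}$ are exactly the paper's argument, with your explicit $\gamma$-free-lift verification of that identification and your handling of the low even degrees for $k>2g$ being, if anything, slightly more careful than the published proof.
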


Before we prove our main theorem we need one more technical lemma.
For $1\le m \le g$, let 
\[
\sigma_{m}:=\sum_{1\leq i_{1}<i_{2}<\dots <i_{m}\leq g} 
e_{i_{1}}f_{i_{1}}\dots e_{i_{m}}f_{i_{m}}
\]
be the
elementary symmetric polynomials in the $g$ products $e_{i}f_{i}$.
Note that $\lambda=-\partial\sigma_{1}$ where $\lambda$ was defined before equation  (\ref{eq:boundary}).
\begin{lemma}
\label{le:Cohomology}
The cohomology of the complex 
\[
\Cycles_{0} \xrightarrow{\lambda}
\Cycles_{1} \xrightarrow{\lambda}
\dots
\Cycles_{2g-1} \xrightarrow{\lambda}
\Cycles_{2g} 
\]
is $\ZZ$ in odd degrees degrees $2m+1$, where $2g-1\geq 2m+1\geq 3$, and 0 elsewhere.
$H^{2m+1}(\Cycles_*)$ is generated by $\lambda\sigma_{m}$.
\begin{proof}
We consider the complex
\[
{\Chains}_{0}(X_{2g}) \xrightarrow{\lambda}
{\Chains}_{1}(X_{2g}) \xrightarrow{\lambda}
\dots
{\Chains}_{2g-1}(X_{2g}) \xrightarrow{\lambda}
{\Chains}_{2g}(X_{2g}) .
\]
Let ${\mathbf B}_0$ be the image of 
$\partial:{\Chains}_1(X_{2g})  \to {\Chains}_0(X_{2g})$. There is a short exact sequence of
cochain complexes:
\[
\xymatrix{
0\ar[r]&{\mathbf B}_0 \ar[r]^{i_0}\ar[d]&{\Chains}_{0}(X_{2g})\ar[r]^{q_0}\ar[d]^{\lambda}&\ZZ\ar[r]\ar[d]^0&0\\
0\ar[r]&{\mathbf K}_1 \ar[r]^{i_1}\ar[d]^{\lambda}&{\Chains}_{1}(X_{2g})\ar[r]^{q_1}\ar[d]^{\lambda}&{\mathbf B}_0\ar[r]\ar[d]&0\\ 
0\ar[r]&\Cycles_2 \ar[r]^{i_2}\ar[d]&{\Chains}_{2}(X_{2g})\ar[r]^{q_2}\ar[d]^{\lambda}&\Cycles_1\ar[r]\ar[d]&0\\
\vdots&\vdots&\vdots&\vdots&\vdots\\
0\ar[r]&\Cycles_{2g} \ar[r]^{i_*}&{\Chains}_{2g}(X_{2g})\ar[r]^{q_*}&\Cycles_{2g-1}\ar[r]&0\\
}
\]
Let ${\mathbf H}^i$ be the cohomology of the cochain complex
\begin{equation}
\label{eq:Bcomplex}
0\to {\mathbf B}_0 \to \Cycles_1 \xrightarrow{\lambda} \Cycles_2 \to \dots
\to{\Cycles}_{2g} \to 0 
\end{equation}
The short exact sequence of  complexes above induces a long exact
sequence of cohomology groups,

\begin{equation}
\label{LES}
\xymatrix{
&{\mathbf H}^0 \ar[r]&0\ar[r]&\ZZ
\ar `r[d] `[ll] `^d[dlll]`^r[d] [dll]
\\
&{\mathbf H}^1 \ar[r]&0\ar[r]&{\mathbf H}^0
\ar `r[d] `[ll] `^d[dlll]`^r[d] [dll]
\\
&{\mathbf H}^2 \ar[r]&0\ar[r]&{\mathbf H}^1\\
&\vdots&\vdots&\vdots\\
&{\mathbf H}^{2g} \ar[r]&\ZZ\ar[r]&{\mathbf H}^{2g-1}\\
}
\end{equation}
It follows inductively that
\[
{\mathbf H}^i=
\begin{cases}
\ZZ& \text{if $1\leq i\leq 2g-1$, and $i$ even,}\\
0&\text{else.} 
\end{cases}
\]
 Actually, we can find a generator for each of the homology groups
inductively. 
The first claim is that ${\mathbf H}^{1}$ is
generated by $\lambda$. This follows from 
the long exact sequence (\ref{LES}), by the definition of the boundary map.
(see e.g.\ \cite{Hatcher} Chapter 2).
Assume inductively that 
${\mathbf H}^{2m-1}$ is generated by 
$\lambda\sigma_{m-1}$. The group ${\mathbf H}^{2m+1}$ is generated by the image of
$\lambda\sigma_{m-1}$ under the boundary map of the long exact sequence of
(\ref{LES}). To compute this boundary, we first find a chain in 
${\Chains}_{2g-1}(X_{2g})$ that maps to the cycle $\lambda \sigma_{m-1}$. 
$\sigma_m$ is up to sign such a chain:
\begin{align*}
\partial \sigma_{m}&=
\partial \sum_{1\leq i_{1}<i_{2}<\dots <i_{m}\leq g} 
e_{i_{1}}f_{i_{1}}\dots e_{i_{m}}f_{i_{m}}\\
&=-\lambda
\sum_{1\leq i_{1}<i_{2}<\dots <i_{m-1}\leq g} 
e_{i_{1}}f_{i_{1}}\dots e_{i_{m-1}}f_{i_{m-1}}\\
&=-\lambda \sigma_{m-1}.
\end{align*}
By the definition of the boundary map,  
${\mathbf H}^{2m+1}$ is generated by the product of $\lambda$
with $\sigma_m$, which completes the induction.

This finishes the computation of the cohomology of the cochain complex
(\ref{eq:Bcomplex}). This cohomology agrees with the cohomology of 
$\Cycles_*$ except in degrees 0 and 1.
To complete the proof of the lemma, we now only have to check 
that it is true for those two dimensions as well. 
But this also follows from our computation, and the fact that
since $\Cycles_0=\Chains_0(X_{2g})$, there is a short exact sequence
\[
0\to {\mathbf B}_0 \to \Cycles_0 \to \ZZ \to 0.
\]
\end{proof}
\end{lemma}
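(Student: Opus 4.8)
The plan is to reduce the computation to the $\lambda$-cohomology of the much simpler complex $(\Chains_{\bullet}(X_{2g}),\lambda)$, which was already determined inside the proof of Lemma~\ref{le:Q} to be a single copy of $\ZZ$ in degree $2g$ and zero in every other degree. The bridge is the family of short exact sequences of $\ZZ[\pi]$-modules
\[
0\to\Cycles_{i}\hookrightarrow\Chains_{i}(X_{2g})\xrightarrow{\partial}\Cycles_{i-1}\to 0\qquad(i\ge 2),
\]
together with $0\to\Cycles_{1}\hookrightarrow\Chains_{1}(X_{2g})\xrightarrow{\partial}\mathbf{B}_{0}\to 0$ and $0\to\mathbf{B}_{0}\hookrightarrow\Chains_{0}(X_{2g})\to\ZZ\to 0$, where $\mathbf{B}_{0}:=\im\,\partial_{1}\subset\Chains_{0}(X_{2g})$; surjectivity of $\partial$ onto $\Cycles_{i-1}$ for $i\ge 2$ is precisely the vanishing of $H_{i-1}(\Chains_{\bullet}(X_{2g}),\partial)$ for $i-1\ge 1$, established in the proof of Lemma~\ref{le:torus}. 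Since $\partial\lambda=0$ and $\lambda$ has odd degree, $\partial$ and $\lambda$ anticommute, so after the customary sign adjustments these rows assemble into a short exact sequence of cochain complexes (differential induced by $\lambda$): the middle column is $(\Chains_{\bullet}(X_{2g}),\lambda)$, the left column is the complex $L^{\bullet}:\ \mathbf{B}_{0}\xrightarrow{\lambda}\Cycles_{1}\xrightarrow{\lambda}\dots\xrightarrow{\lambda}\Cycles_{2g}$, and the right column is $L^{\bullet}$ shifted up by one degree, with an extra copy of $\ZZ$ adjoined in degree $0$ carrying the zero differential.

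First I would feed this short exact sequence of complexes into its long exact cohomology sequence. Writing $\mathbf{H}^{\bullet}=H^{\bullet}(L^{\bullet})$, the right column has cohomology $\ZZ$ in degree $0$ and $\mathbf{H}^{i-1}$ in degree $i\ge 1$, while the middle column is acyclic except for a $\ZZ$ in the top degree $2g$; consequently, wherever the middle column contributes nothing the connecting homomorphism gives isomorphisms $\mathbf{H}^{i+1}\cong\mathbf{H}^{i-1}$, valid exactly for $i\le 2g-2$. Starting from $\mathbf{H}^{0}=0$ --- the map $\lambda$ is injective on $\Chains_{0}(X_{2g})=\ZZ[\pi]$, a Laurent polynomial ring and hence an integral domain, so also injective on its submodule $\mathbf{B}_{0}$ --- and $\mathbf{H}^{1}\cong\ZZ$, generated by $[\lambda]$ (forced by the opening terms of the sequence, where the connecting map $\ZZ\to\mathbf{H}^{1}$ is surjective with zero kernel), the periodicity yields $\mathbf{H}^{i}=\ZZ$ for $i$ odd and $\mathbf{H}^{i}=0$ for $i$ even in the range $0\le i\le 2g-1$. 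A short separate look near degree $2g$ --- where $H^{2g}$ of the middle column is $\ZZ$, which surjects onto $H^{2g}$ of the right column $\cong\mathbf{H}^{2g-1}\cong\ZZ$ and hence isomorphically --- then forces $\mathbf{H}^{2g}=0$.

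To track the generators I would argue inductively: the connecting homomorphism carries the class of $\lambda\sigma_{m-1}$ in $\mathbf{H}^{2m-1}$ to $\pm$ the class of $\lambda\sigma_{m}$ in $\mathbf{H}^{2m+1}$, because $\sigma_{m}\in\Chains_{2m}(X_{2g})$ is a $\partial$-preimage of $\mp\lambda\sigma_{m-1}$ --- the identity $\partial\sigma_{m}=-\lambda\sigma_{m-1}$ extending the stated relation $\lambda=-\partial\sigma_{1}$ --- and $\lambda\cdot(\lambda\sigma_{m})=0$ because $\lambda^{2}=0$ (immediate from anticommutativity of the $e_{i},f_{i}$), which also confirms that $\lambda\sigma_{m}$ genuinely lies in $\Cycles_{2m+1}$. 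Finally I would transfer the answer from $L^{\bullet}$ to $\Cycles_{\bullet}$ itself. These two complexes agree in every degree except $0$, where $\mathbf{B}_{0}$ is replaced by $\Cycles_{0}=\Chains_{0}(X_{2g})$, so $H^{i}(\Cycles_{\bullet})=\mathbf{H}^{i}$ for all $i\ge 2$ --- which is already the asserted $\ZZ$ in odd degrees $2m+1$ with $3\le 2m+1\le 2g-1$, generated by $\lambda\sigma_{m}$, and zero otherwise in that range. For degrees $0$ and $1$, the inclusion $L^{\bullet}\hookrightarrow\Cycles_{\bullet}$ (from the bottom row $0\to\mathbf{B}_{0}\to\Cycles_{0}\to\ZZ\to 0$, identity in higher degrees) has quotient the complex $\ZZ$ concentrated in degree $0$, and the resulting long exact sequence collapses to $0\to H^{0}(\Cycles_{\bullet})\to\ZZ\xrightarrow{\delta}\mathbf{H}^{1}\to H^{1}(\Cycles_{\bullet})\to 0$; since $\delta(1)=[\lambda]$ generates $\mathbf{H}^{1}\cong\ZZ$, the map $\delta$ is an isomorphism, so $H^{0}(\Cycles_{\bullet})=H^{1}(\Cycles_{\bullet})=0$, completing the proof.

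I expect the main obstacle to be organisational rather than conceptual: assembling the staircase of short exact sequences correctly and keeping the (ultimately harmless) signs in the relation $\partial\lambda=-\lambda\partial$ under control, so that the connecting homomorphisms really do become the passage $[\lambda\sigma_{m-1}]\mapsto\pm[\lambda\sigma_{m}]$. A secondary nuisance is the index bookkeeping at the two ends of the complex --- verifying that the periodicity isomorphism holds precisely up to $i=2g-2$, handling the top degree $2g$ and the bottom degrees $0$ and $1$ separately, and noting the small cancellation $\Cycles_{2g}=0$ (so that $L^{\bullet}$ in fact terminates at $\Cycles_{2g-1}$) --- but this needs nothing beyond diagram-chasing together with the fact that $\ZZ[\pi]$ has no zero divisors.
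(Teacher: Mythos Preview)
Your proposal is correct and follows essentially the same route as the paper: both set up the short exact sequence of $\lambda$-cochain complexes with left column $L^{\bullet}=(\mathbf{B}_{0}\to\Cycles_{1}\to\cdots)$, middle column $(\Chains_{\bullet}(X_{2g}),\lambda)$, and right column the shift of $L^{\bullet}$ with an extra $\ZZ$ in degree $0$, then use the known acyclicity of the middle column (except for $\ZZ$ in degree $2g$) to obtain the periodicity $\mathbf{H}^{i+1}\cong\mathbf{H}^{i-1}$, track generators via $\partial\sigma_{m}=-\lambda\sigma_{m-1}$, and finally pass from $L^{\bullet}$ to $\Cycles_{\bullet}$ using $0\to\mathbf{B}_{0}\to\Cycles_{0}\to\ZZ\to 0$. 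Your write-up is in fact a bit more explicit about the edge cases (the injectivity argument for $\mathbf{H}^{0}=0$, the separate handling of degree $2g$, and the vanishing $\Cycles_{2g}=0$), but the underlying argument is the same.
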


At this point we are able to prove the main theorem of this section.

\begin{theorem}
\label{th:main}
Let $k\geq 2$. Then 
\[
H_{i}(\widetilde{\SymmPower {k}\Sigma})\cong
H_i(U_k(Y))\cong 
\begin{cases}
\ZZ &\text{$i$ even, $i\not=k$ and $i\leq \max(2k-2g,k)$,}\\
\Cycles_{k}/\lambda \Cycles_{k-1} \oplus \ZZ& \text{ $i=k$ even, $k\leq 2g$,}\\
\lambda \Cycles_{k}& \text{ $i=k$ odd, $k\leq 2g$}\\
0 & \text{else}. 
\end{cases}
\]
\begin{proof}
In the notations of Lemma~\ref{le:reduction_to_torus} 
we have to compute the groups $\ker \partial_k$ and
$\ker \partial_k/\im \partial_{k-1}$. 

Consider the following commutative diagram (notation as in the proof of Lemma~\ref{le:Cohomology}):
\[
\xymatrix@=5mm{
&&&&0\ar[d]&\\
&&\Chains_{k+1}(X_g)/\lambda \Chains_{k}(X_g)\ar[d]^{q_{k+1}}\ar[rd]^{\partial_{k+1}}&&{\mathbf H}_{k-1}\ar[d]&\\
0\ar[r]&{\mathbf H}_k\ar[r]&\Cycles_k/\lambda\Cycles_{k-1}\ar[r]^{i_k}\ar[d]&\Chains_{k}(X_g)/\lambda \Chains_{k-1}(X_g)\ar[r]^{q_k}\ar[rd]^{\partial_k}&\Cycles_{k-1}/\lambda\Cycles_{k-2}\ar[d]^{i_{k-1}}\ar[r]&0\\
&&0&&\Chains_{k-1}(X_g)/\lambda\Chains_{k-2}(X_g)&\\
}
\]
We claim that, if $k\geq 2$, then 
the two columns and the horizontal row are exact.
The exactness of the left column follows since $q_{k+1}:\Chains_{k+1}\to{\mathbf K}_k$ is surjective. The exactness of the horizontal row and the right vertical column is easy to check by diagram chasing. For instance, to prove exactness of the horizontal row at $\Cycles_k/\lambda\Cycles_{k-1}$, if $a\in \Cycles_k$ such that $[a]\in \Cycles_k/\lambda\Cycles_{k-1}$ is in the kernel of $i_k$, then there is a $b\in \Chains_{k-1}(X_g)$ such that $a=\lambda b\in \Cycles_k$. Since $\lambda a=\lambda^2 b=0$, $a$ represents a homology class in ${\mathbf H}_k$. 

From the diagram above follows that 
\begin{align*}
H_k(U_k(Y_g))&=\ker \partial_k=q_k^{-1}({\mathbf H}_{k-1})
\cong \im(i_k)\oplus {\mathbf H}_{k-1} \\
&=
\begin{cases}
\ZZ \oplus \Cycles_{k}/\lambda \Cycles_{k-1}&\text{ if $k$ is even,}\\
(\Cycles_{k}/\lambda \Cycles_{k-1})/{\mathbf H}_{k}
\cong \lambda \Cycles_{k}&\text{ if $k$ is odd.} 
\end{cases}
\end{align*}
This proves the theorem for the case $k=i$. 

In the case $k\not=i$,
the same computation proves that, if $i<k\leq 2g$, one has
\[
H_i(U_k(Y_g))=\ker \partial_k/\im(i_k)\cong {\mathbf H}_i\cong
\begin{cases}
\ZZ &\text{ if $k$ is even,}\\
0 &\text{ if $k$ is odd.}
\end{cases}
\]
Together with Lemma~\ref{le:reduction_to_torus}, this completes the proof of the theorem. 
\end{proof}
\end{theorem}

We now wish to make a few remarks on these groups:

\begin{lemma} 
\label{remarks}
$\;$

\begin{enumerate}
\item
$\Cycles_k/\lambda \Cycles_k$ and $\lambda \Cycles_k$
are free Abelian groups. 
\item
$\Cycles_{2g}/\lambda \Cycles_{2g-1}=0\quad \text{and} \quad
\lambda\Cycles_{2g-1}=0$.
\item 
If $2\leq k\leq 2g-2$, then neither $\lambda \Cycles_k$ nor
$\Cycles_{k}/\lambda \Cycles_{k-1}$
 are finitely generated. 
 \end{enumerate}
\begin{proof}
$\,$\\
\begin{enumerate}
\item
$\lambda \Cycles_k \subset \Cycles_{k+1}\subset {\mathcal C}_{k+1}(X_{2g})$, so as a subgroup
of a free Abelian group, it is free Abelian.
The canonical short exact sequence
\[
0 \xrightarrow{} {\mathbf H}_k
 \xrightarrow{} {\Cycles}_k/\lambda\Cycles_{k-1}
 \xrightarrow{} \lambda{\Cycles}_k
 \xrightarrow{} 0
\]
splits, since $\lambda \Cycles_k$ is a free Abelian group.
It follows that $\Cycles_k/\lambda \Cycles_k$ is also a free Abelian group.

\item
Since $\Cycles_{2g}=0$, we obviously have that $\Cycles_{2g}/\lambda \Cycles_{2g-1}$
is trivial. Let $a\in \Cycles_{2g-1}$, We want to show that $\lambda a$ is trivial.
But $a=\partial b$ for some $b\in {\mathcal C}_{2g}$, so 
$\lambda a =\lambda \partial b=\partial (\lambda b)$. Since $\lambda b\in {\mathcal C}_{2g+1}=0$,
this class is zero.

\item
We need to show that if $2\leq k\leq 2g-2$, then $\lambda \Cycles_k$ is a nontrivial group.
Pick integers $1<i_1<i_2<\dots < i_m \leq g$, $1<j_1<j_2<\dots< j_n \leq g$, such that
$m+n=k$. This can be done, since $k\leq 2g-2$. Let
$a=e_1e_ {i_1}\cdots e_{i_m}f_{i_1}\cdots f_{j_n}\in {\mathcal C}_{k+1}$. The point is
that this is a monomial in $e$ and $f$ such that $e_1$ occurs, but $f_1$ does not occur.
We claim that $\lambda \partial a$ is a nontrivial element of $\lambda \Cycles_k$. To see
that it is nontrivial, consider the homomorphism
$F : {\mathcal C}_* \to \Lambda[e_1,\dots,e_g,f_1,\dots,f_g]$ given by 
$F(x_i)=1$, $i\not=1$, $F(x_1)=0$, $F(y_i)=1$, $F(e_i)=e_i$ and $F(f_i)=f_i$.
We compute
\[
F(\lambda \partial a)=F(\lambda)F(\partial a)=-f_1 \cdot e_ {i_1}\cdots e_{i_m}f_{i_1}\cdots f_{j_n}\not=0.
\]
It follows that $\lambda \Cycles_k\not=0$.
Moreover,  $\lambda \Cycles_k\subset {\mathcal C}_{n+1}$
as a submodule over the group ring $\ZZ[\FundGroup]$. There
are no zero divisors in this ring, so any nontrivial element of
${\mathcal C}_{n+1}$ generates a copy of $\ZZ[\FundGroup]$, which is not
finitely generated as an  group. It follows that   $\lambda \Cycles_k$
cannot be finitely generated.
\end{enumerate}
\end{proof}
\end{lemma}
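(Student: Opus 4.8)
The plan is to establish the three assertions separately; (1) and (2) are soft structural facts, while (3) is the real content. Throughout I would use that $\Chains_{*}(X_{2g})=\ZZ[\FundGroup]\otimes\Lambda[e_{1},\dots,e_{g},f_{1},\dots,f_{g}]$ is free Abelian over $\ZZ$ (a basis being the monomials $\alpha\cdot e_{I}f_{J}$ with $\alpha\in\FundGroup$), that $\partial$ is $\ZZ[\FundGroup]$-linear, and that $\ZZ[\FundGroup]\cong\ZZ[\ZZ^{2g}]$ is an integral domain. For (1): since $\lambda\Cycles_{k}\subset\Cycles_{k+1}\subset\Chains_{k+1}(X_{2g})$ and subgroups of free Abelian groups are free Abelian, $\lambda\Cycles_{k}$ is free Abelian; to handle $\Cycles_{k}/\lambda\Cycles_{k-1}$ I would invoke the tautological short exact sequence
\[
0\longrightarrow\mathbf{H}_{k}\longrightarrow\Cycles_{k}/\lambda\Cycles_{k-1}\xrightarrow{\ \lambda\ }\lambda\Cycles_{k}\longrightarrow0,
\]
where $\mathbf{H}_{k}$ is the cohomology of $(\Cycles_{*},\lambda)$ in degree $k$, which by Lemma~\ref{le:Cohomology} is $\ZZ$ or $0$. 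Since $\lambda\Cycles_{k}$ is free Abelian (hence $\ZZ$-projective) the sequence splits, so $\Cycles_{k}/\lambda\Cycles_{k-1}\cong\mathbf{H}_{k}\oplus\lambda\Cycles_{k}$ is free Abelian.

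For (2): the first identity is immediate once $\Cycles_{2g}=0$, which I would obtain from $\Chains_{2g}(X_{2g})=\ZZ[\FundGroup]\cdot e_{1}\cdots e_{g}f_{1}\cdots f_{g}$ together with the observation that $\partial$ sends $p\cdot e_{1}\cdots f_{g}$ to a $\ZZ[\FundGroup]$-combination of the independent degree-$(2g-1)$ monomials with coefficients $\pm p(x_{i}-1),\ \pm p(y_{i}-1)$; since $\ZZ[\FundGroup]$ is a domain and $x_{i}-1\neq0$ this forces $p=0$ (alternatively, quote the acyclicity of $\Chains_{*}(X_{2g})$ in positive degrees from the proof of Lemma~\ref{le:torus}). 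For the second identity I would use $\lambda=-\partial\sigma_{1}$: if $a\in\Cycles_{2g-1}$, then since $\partial a=0$ and $\sigma_{1}$ has even degree, $\lambda a=-(\partial\sigma_{1})a=-\partial(\sigma_{1}a)$, and $\sigma_{1}a$ lies in exterior degree $2g+1>2g$, hence vanishes, so $\lambda a=0$.

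For (3), the crux, I would exhibit an explicit nonzero element of $\lambda\Cycles_{k}$. Since $2\le k\le2g-2=2(g-1)$, choose $m,n\ge0$ with $m+n=k$ and $m,n\le g-1$, pick $2\le i_{1}<\dots<i_{m}\le g$ and $2\le j_{1}<\dots<j_{n}\le g$, and set $a:=e_{1}e_{i_{1}}\cdots e_{i_{m}}f_{j_{1}}\cdots f_{j_{n}}\in\Chains_{k+1}(X_{2g})$, so $\partial a\in\Cycles_{k}$. To prove $\lambda\partial a\neq0$ I would construct a detecting ring homomorphism $F$ on the polynomial sub-DGA $\ZZ[x_{1},\dots,x_{g},y_{1},\dots,y_{g}]\otimes\Lambda[e_{*},f_{*}]$ (which contains $\lambda$, $\partial a$ and their product) into $\Lambda[e_{1},\dots,e_{g},f_{1},\dots,f_{g}]$, given by $F(x_{1})=0$, $F(x_{i})=F(y_{i})=1$ otherwise, $F(e_{i})=e_{i}$, $F(f_{i})=f_{i}$. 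A short computation then yields $F(\lambda)=-f_{1}$ and $F(\partial a)=-e_{i_{1}}\cdots e_{i_{m}}f_{j_{1}}\cdots f_{j_{n}}$ (the terms of $\partial a$ differentiating a factor of index $>1$ die because $F(x_{i}-1)=F(y_{i}-1)=0$ there), so $F(\lambda\partial a)=f_{1}e_{i_{1}}\cdots e_{i_{m}}f_{j_{1}}\cdots f_{j_{n}}$ is a product of $k+1$ distinct exterior generators, hence nonzero; thus $\lambda\Cycles_{k}\neq0$.

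To finish (3) I would note that $\lambda\Cycles_{k}$ is a $\ZZ[\FundGroup]$-submodule of the torsion-free module $\Chains_{k+1}(X_{2g})$ over the domain $\ZZ[\FundGroup]$, so any nonzero element generates a submodule isomorphic to $\ZZ[\FundGroup]\cong\ZZ[\ZZ^{2g}]$, which is not finitely generated as an Abelian group; hence $\lambda\Cycles_{k}$ is not finitely generated, and by the splitting from (1) neither is $\Cycles_{k}/\lambda\Cycles_{k-1}$. The step I expect to be the main obstacle is exactly the choice of the detecting homomorphism $F$: the obvious augmentation $x_{i},y_{i}\mapsto1$ annihilates $\lambda$ altogether, so one has to find a substitution that keeps $\lambda$ alive while still killing the unwanted differentials in $\partial a$; sending $x_{1}\mapsto0$ achieves this, at the mild cost that $F$ is only defined on the polynomial (non-Laurent) part of the DGA — which is harmless because $\lambda\partial a$ has polynomial coefficients, but it is the point one must be careful about.
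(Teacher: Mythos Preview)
Your proof is correct and follows essentially the same route as the paper's: the same short exact sequence for (1), the same degree-counting for (2), and the same explicit monomial $a$ together with the same detecting substitution $x_{1}\mapsto 0$, $x_{i},y_{i}\mapsto 1$ for (3). Your version is in fact slightly more careful than the paper's in two places --- you justify $\Cycles_{2g}=0$ rather than asserting it, and you correctly observe that the map $F$ only makes sense as a ring homomorphism on the polynomial (non-Laurent) subalgebra, which is all that is needed since $\lambda\,\partial a$ has polynomial coefficients.
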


\begin{remark} For even $k$,  
we have a short exact sequence of modules
\[
0\to \Cycles_{k}/\lambda\Cycles_{k-1}\to H_{k}(U_{k}(X_{2g}))\to \ZZ\to 0.
\]
This sequence is split as a short exact sequence of  groups,
but it is not split as a sequence of 
$\ZZ[x_{1}^{ \pm 1},\dots,x_{g}^{ \pm 1},y_{1}^{ \pm 1},\dots,y_{g}^{
  \pm 1}]$-modules. In particular, for $k=2$, Lemma 9 of \cite{Kallel}
shows that for $g=2$ we have
$H_{2}(U_{2}(X_{2g}))\cong
\ZZ[x_{1}^{ \pm 1},x_{2}^{ \pm 1},y_{1}^{ \pm 1},y_{2}^{
  \pm 1}]$, which does not contain a submodule isomorphic to $\ZZ$.
However, the result is compatible with ours. If
\[
\eta:\ZZ[x_{1}^{ \pm 1},x_{2}^{ \pm 1},y_{1}^{ \pm 1},y_{2}^{
  \pm 1}]\to \ZZ
\]
is the augmentation map, there is an isomorphism
$\alpha : \Cycles_{k}/\lambda\Cycles_{k-1} \to \ker(\eta)$.
\end{remark}

For the purposes of this paper, the main outcome of our homology computations
is the following consequence of Lemma~\ref{remarks}:

\begin{corollary} \label{CORnotfg}
Let $\Sigma$ be a closed, oriented surface of genus $g>1$. 
Assume that $2 \leq k \leq 2g-2$. Let $\widetilde{\SymmPower {k}\Sigma}$ denote the universal cover of the $k$-fold symmetric product of $\Sigma$. The homology group $H_{k}(\widetilde{\SymmPower {k}\Sigma};\ZZ)$ is a free Abelian group, but it is not finitely generated.
\end{corollary}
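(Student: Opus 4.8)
The plan is to deduce the statement directly from Theorem~\ref{th:main} together with Lemma~\ref{remarks}, once we note that the hypothesis $2\le k\le 2g-2$ puts us inside the range $k\le 2g$ covered by those results. First I would apply Theorem~\ref{th:main} in the case $i=k$: since $k\le 2g-2<2g$, the group $H_{k}(\widetilde{\SymmPower k\Sigma})$ is isomorphic to $\Cycles_{k}/\lambda\Cycles_{k-1}\oplus\ZZ$ if $k$ is even, and to $\lambda\Cycles_{k}$ if $k$ is odd. (The first clause in the statement of Theorem~\ref{th:main} is excluded because it requires $i\ne k$, and the case $k>2g$ is excluded by hypothesis, so these are the only possibilities.)

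For freeness: if $k$ is odd, $H_{k}(\widetilde{\SymmPower k\Sigma})\cong\lambda\Cycles_{k}$, which is free Abelian by Lemma~\ref{remarks}(1), being a subgroup of the free Abelian group $\Chains_{k+1}(X_{2g})$. If $k$ is even, Lemma~\ref{remarks}(1) likewise gives that $\Cycles_{k}/\lambda\Cycles_{k-1}$ is free Abelian (the canonical short exact sequence $0\to{\mathbf H}_{k}\to\Cycles_{k}/\lambda\Cycles_{k-1}\to\lambda\Cycles_{k}\to0$ splits because $\lambda\Cycles_{k}$ is free), and adjoining a $\ZZ$ summand preserves freeness. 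Hence $H_{k}(\widetilde{\SymmPower k\Sigma})$ is a free Abelian group in either parity.

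For the failure of finite generation: by Lemma~\ref{remarks}(3), the hypothesis $2\le k\le 2g-2$ forces both $\lambda\Cycles_{k}$ and $\Cycles_{k}/\lambda\Cycles_{k-1}$ to fail to be finitely generated. Since in each parity $H_{k}(\widetilde{\SymmPower k\Sigma})$ contains one of these groups as a direct summand, and a direct summand of a finitely generated Abelian group is itself finitely generated, we conclude that $H_{k}(\widetilde{\SymmPower k\Sigma})$ is not finitely generated. This proves the corollary.

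I do not anticipate a genuine obstacle, since the substantive content lies in Lemma~\ref{remarks}, which in turn rests on Lemma~\ref{le:reduction_to_torus} and on the fact that $\Cycles_{k}$ is a module over the group ring $\ZZ[\FundGroup]$, a domain with no zero divisors, so that any nonzero element spans a rank-one free $\ZZ[\FundGroup]$-submodule and hence an infinitely generated Abelian group. The only points requiring care are the bookkeeping --- matching the two parities of $k$ in Theorem~\ref{th:main} to the corresponding clauses of Lemma~\ref{remarks} --- and observing that it is the sharper inequality $k\le 2g-2$, rather than merely $k\le 2g$ (where $\Cycles_{2g}/\lambda\Cycles_{2g-1}=0$ and $\lambda\Cycles_{2g-1}=0$ by Lemma~\ref{remarks}(2)), that makes the argument go through.
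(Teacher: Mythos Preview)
Your proposal is correct and follows exactly the route the paper intends: the paper does not spell out a proof of the corollary but simply declares it a ``consequence of Lemma~\ref{remarks}'', and your argument supplies precisely the bookkeeping needed---invoking Theorem~\ref{th:main} to identify $H_k$ in each parity and then reading off freeness and non-finite generation from parts (1) and (3) of Lemma~\ref{remarks}. Note only that the statement of Lemma~\ref{remarks}(1) in the paper contains an evident typo ($\Cycles_k/\lambda\Cycles_k$ should be $\Cycles_k/\lambda\Cycles_{k-1}$, as its proof makes clear), which you have correctly interpreted.
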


\section{The homotopy type of $\widetilde{{\rm Sym}^{k}(\Sigma)}$}
\label{sec:Homotopy}

In this section, we determine the homotopy type 
of the universal cover of the symmetric products of a closed oriented surface. Although not directly relevant to
the proof of our main result, we include this calculation for the sake of completeness.

\begin{theorem}
\label{th:HomotopyTypeX}
\[
\pi_i(U_k(X_n))=
\begin{cases}
0 & \text{ if $i < k$ or if $n \leq k$,}\\
\Cycles_{k}& \text{ if $i=k<n$.}\\
\end{cases}
\]
Furthermore, $U_k(X_n)$ is homotopy equivalent
to a wedge of spheres of dimension $k$.
\begin{proof}
According to Lemma~\ref{le:torus}, $U_k(X_n)$ has trivial homology in
dimensions less than $k$. Since it is also simply connected by
definition,
Hurewicz's theorem says that the Hurewicz map
\[
h:\pi_{k}(U_k(X_n))\to
H_{k}(U_k(X_n))
\]
is an isomorphism. The statements about homotopy groups 
follow from this and the 
computation in Lemma~\ref{le:torus} of the cellullar homology of
 $U_k(X)$.
To prove the last statement, chose a family of generators
$\{f_\alpha\}_{\alpha\in A}$ for 
 $\pi_k(U_k(X))$. This family defines 
a homology equivalence  $f=\vee f_\alpha : \vee S^k_\alpha\to
U_k(X)$.
Since both its source and the target are simply connected, $f$
is a homotopy equivalence.  
\end{proof}
\end{theorem}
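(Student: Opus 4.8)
The plan is to prove Theorem~\ref{th:HomotopyTypeX} in two stages, first extracting the homotopy groups from the homology computation via the Hurewicz theorem, and then upgrading the homological information to a genuine homotopy equivalence with a wedge of spheres by exploiting simple connectivity.

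First I would recall that $U_k(X_n)$ is, by definition, the universal cover of $\ReducedSymm k{X_n}$, hence simply connected, so $\pi_1(U_k(X_n))=0$. Combined with Lemma~\ref{le:torus}, which tells us $H_i(U_k(X_n))=0$ for $0<i<k$ (and also $H_i=0$ for all $i>0$ when $n\le k$), the Hurewicz theorem applies: the first nonvanishing reduced homology group occurs in degree $k$ (or never, if $n\le k$), and the Hurewicz map $h:\pi_k(U_k(X_n))\to H_k(U_k(X_n))$ is an isomorphism. By Lemma~\ref{le:torus} this identifies $\pi_k(U_k(X_n))$ with $\Cycles_k$ when $k<n$, and with $0$ when $n\le k$; for $i<k$ the group $\pi_i$ vanishes because the space is $(k-1)$-connected. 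This establishes the displayed formula for the homotopy groups.

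For the last statement, I would choose a set $\{f_\alpha\}_{\alpha\in A}$ of generators of the Abelian group $\pi_k(U_k(X_n))\cong\Cycles_k$ (when $n\le k$ this group is trivial and the target is contractible, so there is nothing to prove; otherwise $A$ is infinite, reflecting the fact from Remark~\ref{rem:chains} that $\Cycles_k$ is not finitely generated). Each $f_\alpha$ is represented by a based map $S^k\to U_k(X_n)$, and assembling these gives a map $f=\bigvee_\alpha f_\alpha:\bigvee_{\alpha\in A}S^k_\alpha\to U_k(X_n)$. On homology, $f$ sends the canonical generator of $H_k$ of the $\alpha$th sphere to $f_\alpha\in H_k(U_k(X_n))$ (using that the Hurewicz map in degree $k$ is compatible with $f$), so $f_*$ is surjective in degree $k$; since $H_*(\bigvee S^k)$ is concentrated in degrees $0$ and $k$, and in degree $k$ it is precisely the free Abelian group on $\{f_\alpha\}$ which by construction maps isomorphically onto $H_k(U_k(X_n))$, the map $f_*$ is an isomorphism in every degree. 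Thus $f$ is a homology equivalence between simply connected CW-complexes, and by the homology Whitehead theorem it is a homotopy equivalence.

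The one genuine subtlety — rather than an obstacle — is making sure the generating set $\{f_\alpha\}$ is chosen to be a \emph{basis} of the free Abelian group $\Cycles_k$ (which is free by Remark~\ref{rem:chains}, being a subgroup of the free Abelian group $\Chains_k(X_n)$), rather than merely a generating set, so that $f_*$ in degree $k$ is injective as well as surjective; with a redundant generating set the wedge would have too much homology and $f$ would only be a homology surjection. Everything else is the standard argument that a homology isomorphism between simply connected spaces is a weak equivalence, hence (for CW-complexes) a homotopy equivalence.
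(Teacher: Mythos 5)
Your proposal is correct and follows essentially the same route as the paper: Hurewicz applied to the simply connected space with homology given by Lemma~\ref{le:torus}, then a wedge of $k$-spheres mapping in via chosen elements of $\pi_k$, concluded by the homology Whitehead theorem. Your extra remark that the family $\{f_\alpha\}$ must be taken to be a \emph{basis} of the free Abelian group $\Cycles_k$ (not merely a generating set) is a worthwhile sharpening of the paper's phrasing, which speaks only of ``a family of generators'' yet needs $f_*$ to be injective in degree $k$.
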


\begin{remark}
Theorem~\ref{th:HomotopyTypeX} is compatible with
\cite{KallelSalvatore}, Lemma 3, which states that 
$\ReducedSymm kX$ is homeomorphic to the $k$-skeleton of 
the torus $\mathbb{T}^{2g}$. 
\end{remark}

\begin{lemma}
\label{le:cofibration}
Let $m:=\min\{k-g,[k/2]\}$.
There is a map $f :\vee_{i\in I}S^k \to U_k(Y_g)$ such that 
the mapping cone $C(f)$ of $f$ is homotopy equivalent to
$\CP^{m}$. If $2\leq k\leq 2g-2$, the index set $I$ is infinite.
\begin{proof}
If $k\geq 2g$, we know from a theorem by 
Mattuck~\cite{Mat}
that $U_k(Y_g)\simeq \widetilde{\SymmPower k{\Sigma_g}}$ is homotopy equivalent to
$\CP^{k-g}$. We only have to consider the case that $k<2g$. 
The stabilisation map defines a map
\[
s : U_{k}(Y_g)\to U_{2g}(Y_g) \simeq \CP^{g} \subset \CP^{\infty}.
\]
This map is an isomorphism on homology in dimensions $\leq k-1$, 
and by Lemma \ref{le:torus} and Theorem \ref{th:main} we
have an exact sequence
\[
H_k(U_{k}(X_{2g})\xrightarrow{{i_{*}}}
H_k(U_{k}(Y_g))\to 
H_{k}(\CP^{\infty})\to 0.
\]
We can chose a free  subgroup of 
$H_k(U_{k}(X_{2g}))$ that maps 
isomorphically to the image of 
$H_k(U_{k}(X_{2g}))$ 
inside $H_k(U_{k}(Y_g))$. The Hurewicz map
identifies this subgroup with a free subgroup of
$\pi_k(U_{k}(X_{2g}))$. Let $I$
be a set of generators for this subgroup. Each generator
corresponds to a homotopy class of based maps
$S^{k}\to U_{k}(X_{2g})$. Together, they define a map
$f :\vee_{i\in I}S^k \to U_k(Y_g)$ such that 
$s\circ f$ is homotopically trivial. We intend to show that 
$C(f)$ is homotopy equivalent to $\CP^{m}$.
The long exact homology sequence of the cofibration defining
$C(f)$ already shows that at least its homology is correct:
\[
H_{i}(C(f))\cong
\begin{cases}
\ZZ & \text{ if $0\leq i \leq 2m$ and $i$ is even,}\\
0&\text{ else.} 
\end{cases}
\]

The homotopy trivialisation of the composite $s\circ f$
defines a map $s^{\prime}$ from the cofibre $C(f)$ to
$\CP^{\infty}$. By obstruction theory, the obstructions
to factoring this map over $\CP^m$ lie in the groups
$H^{j}(C(f),\pi_{j-1}(S^{2m+1}))$. Since these
groups are all trivial, we can find a map 
$h:C(f)\to \CP^{m}$ such that $s^{\prime}$ is homotopic
to $i\circ h$. It follows that $h$
induces isomorphism on all homology groups. 
The assertion in the theorem now follows from this and from Whitehead's theorem. 
\end{proof}
\end{lemma}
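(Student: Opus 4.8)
The plan is to split into the ranges $k\geq 2g$ and $k<2g$, and to realise the homotopy equivalence $C(f)\simeq\CP^{m}$ not by exhibiting a map $C(f)\to\CP^{m}$ directly but by comparing everything with $\CP^{\infty}=K(\ZZ,2)$ via the stabilisation maps. When $k\geq 2g$ there are no special effective divisors of degree $k$, so by the $\CP^{k-g}$-bundle description of $\widetilde{\SymmPower{k}{\Sigma_g}}$ recalled in Section~\ref{topSym} — equivalently, by Mattuck's theorem~\cite{Mat} — the space $U_k(Y_g)$ is already homotopy equivalent to $\CP^{k-g}$; one then takes $I=\emptyset$, so that $f$ is the empty map and $C(f)=U_k(Y_g)\simeq\CP^{k-g}$, with $m=k-g$. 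From now on I would assume $k<2g$, in which case the value to aim for is $m=[k/2]$.

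To construct $f$: the stabilisation maps of Section~\ref{sec:CW} provide $s\colon U_k(Y_g)\to U_{2g}(Y_g)$, and since the cells added when the index is raised all have dimension $\geq k+1$, this is an isomorphism on homology in degrees $\leq k-1$; composing with $U_{2g}(Y_g)\simeq\CP^{g}$ and $\CP^{g}\hookrightarrow\CP^{\infty}$ gives $s\colon U_k(Y_g)\to\CP^{\infty}$. Comparing Lemma~\ref{le:torus} and Theorem~\ref{th:main} with the homology of $\CP^{\infty}$ produces an exact sequence $H_k(U_k(X_{2g}))\xrightarrow{i_{*}}H_k(U_k(Y_g))\to H_k(\CP^{\infty})\to 0$, so $A:=\im i_{*}$ is precisely $\ker\big(H_k(U_k(Y_g))\to H_k(\CP^{\infty})\big)$. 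As $H_k(U_k(Y_g))$ is free Abelian (Lemma~\ref{remarks}), $A$ is free, so I can choose $\widetilde A\subseteq H_k(U_k(X_{2g}))$ mapping isomorphically onto $A$; by Theorem~\ref{th:HomotopyTypeX} the Hurewicz isomorphism turns $\widetilde A$ into a free subgroup of $\pi_k(U_k(X_{2g}))$, and a basis $\{[\phi_{\alpha}]\}_{\alpha\in I}$ of it defines $f:=\bigvee_{\alpha}(i\circ\phi_{\alpha})\colon\bigvee_{\alpha\in I}S^{k}\to U_k(Y_g)$. If $2\leq k\leq 2g-2$, then by Lemma~\ref{remarks}(3) the group $A$ — equal to $\Cycles_k/\lambda\Cycles_{k-1}$ for $k$ even and to $\lambda\Cycles_k$ for $k$ odd — fails to be finitely generated, so $I$ is infinite. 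The point of this particular choice is that each composite $s\circ f_{\alpha}\colon S^{k}\to\CP^{\infty}$ is nullhomotopic: for $k\geq 3$ this is automatic because $\pi_k(\CP^{\infty})=0$, and for $k=2$ it holds because the Hurewicz image of $s\circ f_{\alpha}$ in $H_2(\CP^{\infty})$ lies in the image of $A=\ker\big(H_2(U_2(Y_g))\to H_2(\CP^{\infty})\big)$ and hence vanishes.

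I would then compute the homology of the mapping cone from the long exact sequence of the cofibration $\bigvee_{I}S^{k}\xrightarrow{f}U_k(Y_g)\to C(f)$. Since $\widetilde H_{*}(\bigvee_{I}S^{k})$ is concentrated in degree $k$ and $f_{*}$ is injective there (because $\widetilde A\to A$ is an isomorphism), this yields $H_i(C(f))\cong H_i(U_k(Y_g))$ for $i<k$, $H_k(C(f))\cong H_k(U_k(Y_g))/A\cong H_k(\CP^{\infty})$, and $H_i(C(f))=0$ for $i>k$; combined with Theorem~\ref{th:main} this says exactly that $\widetilde H_i(C(f))\cong\ZZ$ for $i$ even with $0\leq i\leq 2m$ and vanishes otherwise — that is, $C(f)$ has the integral homology of $\CP^{m}$. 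Moreover $C(f)$ is simply connected, since it is obtained from the simply connected space $U_k(Y_g)$ by attaching cells of dimension $k+1\geq 3$.

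Finally I would promote this to a homotopy equivalence. The chosen nullhomotopies of the $s\circ f_{\alpha}$ assemble to an extension $s'\colon C(f)\to\CP^{\infty}$ of $s$, and since the homotopy fibre of $\CP^{m}\hookrightarrow\CP^{\infty}$ is $S^{2m+1}$, the obstructions to lifting $s'$ through this inclusion lie in the groups $H^{j+1}(C(f);\pi_j(S^{2m+1}))$; these all vanish because $\pi_j(S^{2m+1})=0$ for $j\leq 2m$ while $H^{j+1}(C(f))=0$ for $j+1>2m$. Hence there is $h\colon C(f)\to\CP^{m}$ with $s'\simeq(\text{incl})\circ h$. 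In degrees $\leq 2m$ both $\CP^{m}\hookrightarrow\CP^{\infty}$ and $s'$ are homology isomorphisms (using that $s$ is a homology isomorphism in degrees $\leq k-1$ and that $H_k(C(f))\cong H_k(\CP^{\infty})$), so $h$ induces isomorphisms on all homology groups; as $C(f)$ and $\CP^{m}$ are both simply connected, Whitehead's theorem gives $C(f)\simeq\CP^{m}$. I expect this last step to be the main obstacle: $C(f)$ carries no evident map to $\CP^{m}$, so one is forced to route through $\CP^{\infty}=K(\ZZ,2)$ and then kill the lifting obstructions using the vanishing of $H^{*}(C(f))$ above degree $2m$. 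A secondary subtlety is the case $k=2$, where $\pi_k(\CP^{\infty})\neq 0$ and one genuinely needs $A$ to die in $H_2(\CP^{\infty})$ when selecting $f$.
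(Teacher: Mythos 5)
Your argument is the paper's own proof: for $k\geq 2g$ you invoke Mattuck, and for $k<2g$ you use the stabilisation $s:U_k(Y_g)\to U_{2g}(Y_g)\simeq\CP^g\subset\CP^\infty$, realise the image of $H_k(U_k(X_{2g}))$ in $H_k(U_k(Y_g))$ by a wedge of $k$-spheres via Hurewicz, compute $H_*(C(f))$ from the cofibration, extend the nullhomotopy of $s\circ f$ to $s':C(f)\to\CP^\infty$, lift through $\CP^m\hookrightarrow\CP^\infty$ by obstruction theory, and finish with Whitehead. In fact you supply several details the paper leaves implicit and which genuinely need saying: the $k=2$ case of the nullhomotopy of $s\circ f$ (where $\pi_2(\CP^\infty)\neq0$ and one must use that the Hurewicz images lie in the kernel of $s_*$), the freeness of the subgroup being realised, the simple connectivity of $C(f)$, and the reason $s$ is a homology isomorphism in degrees $\leq k-1$.

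The one point that needs attention is your identification of $m$. You assert that for $k<2g$ one has $m=[k/2]$ and for $k\geq 2g$ one has $m=k-g$; but with the lemma's definition $m=\min\{k-g,[k/2]\}$ the minimum is $k-g$ in the range $g\leq k<2g$ and $[k/2]$ when $k>2g$, so strictly speaking you prove the statement with a different exponent than the one announced. Note, however, that your exponent is the one actually forced by the inputs: Theorem~\ref{th:main} gives $H_i(U_k(Y_g))\cong\ZZ$ for every even $i<k$ when $k\leq 2g$, so the long exact sequence makes $H_*(C(f))$ that of $\CP^{[k/2]}$ (not of $\CP^{k-g}$), and Mattuck gives $\CP^{k-g}$ for $k\geq 2g$; the two expressions agree with $\min\{k-g,[k/2]\}$ only when $k\geq 2g-1$. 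So the tension is between the formula in the statement of Lemma~\ref{le:cofibration} (and Theorem~\ref{homotopytype}) and Theorem~\ref{th:main}, with the stated ``$\min$'' behaving as if it should be the other element of the pair; one can check in examples (e.g.\ $g=3$, $k=3,4$, where the Abel--Jacobi picture exhibits a degree-$2$ class whose square is nonzero) that the value your computation produces is the consistent one. You should make this substitution explicit rather than silent — either prove the lemma with $m$ as you define it and record the discrepancy with the printed formula, or explain why you believe the printed formula is a slip. None of this affects the application to Theorem~\ref{THMholbiscur}, which uses only Corollary~\ref{CORnotfg} and not the homotopy type determined here.
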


This result is sufficient to determine the homotopy type of $U_{k}(Y_g)$.

\begin{theorem} \label{homotopytype} There are homotopy equivalences
\[
\widetilde{\SymmPower k\Sigma}\simeq U_k(Y_g)\simeq \CP^{m}\vee (\vee_{i\in I}S^k),
\]
where $m$ and $I$ are as in Lemma \ref{le:cofibration}.
\begin{proof}
According to Lemma \ref{le:cofibration} we have a cofibration sequence
up to homotopy
\[
\vee_{i\in I}S^k \xrightarrow{f}U_{k}(Y_g)
\xrightarrow{h} \CP^{m}.
\] 

We construct a splitting $s$, up to homotopy, of the map
$h:U_k(Y)\to \CP^{m}$. That is, a map $s:\CP^{m}\to U_{k}(Y_g)$ such
that the composite $s\circ f$ is homotopic to the identity.
The existence of this splitting is guaranteed by obstruction theory, since
the obstructions to the splitting are in the trivial groups
$H^i(\CP^m,\pi_{i}(C(f),U_n(\Sigma_g))$.

Consider the diagram, commutative up to homotopy
\[
\xymatrix{
\vee_{i\in I}S^{n}\ar[r]^f& U_n(Y_g)\ar[r]^h& \CP^m\\
&\CP^m\ar[u]^{s}\ar[ru]_{\simeq\mathrm{Id}}&
}
\] 
The top row is a cofibration up to homotopy. It follows that the map 
$f \vee s:\vee_{i\in I}S^{n} \vee  \CP^{m}\to U_n(Y_g)$ is a homotopy equivalence.
\end{proof}
\end{theorem}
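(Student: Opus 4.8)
The plan is to promote the cofibration sequence supplied by Lemma~\ref{le:cofibration} to a genuine splitting. From that lemma we have, up to homotopy, a cofibre sequence
\[
\vee_{i\in I}S^{k}\xrightarrow{f}U_{k}(Y_{g})\xrightarrow{h}\CP^{m},
\]
meaning the mapping cone of $f$ is homotopy equivalent to $\CP^{m}$. Since the identification $\widetilde{\SymmPower k\Sigma}\simeq U_{k}(Y_{g})$ is already built into the constructions of Section~\ref{sec:CW}, it suffices to show that this sequence splits, i.e.\ that $h$ admits a section $s\colon\CP^{m}\to U_{k}(Y_{g})$ with $h\circ s\simeq\mathrm{id}_{\CP^{m}}$. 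The range $k\ge 2g$ is already covered (Mattuck's theorem gives $U_{k}(Y_{g})\simeq\CP^{k-g}$, i.e.\ $I=\emptyset$), so throughout I would assume $k<2g$.

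The one step carrying real content is the construction of $s$, which I would do by obstruction theory. Identify $\CP^{m}$ with the mapping cone $C(f)$, so that $h$ becomes the canonical map of $U_{k}(Y_{g})$ into its cone; as this cone is obtained from $U_{k}(Y_{g})$ by attaching cells of dimension $k+1$ only, the pair $\bigl(C(f),U_{k}(Y_{g})\bigr)$ is $k$-connected, so $\pi_{i}\bigl(C(f),U_{k}(Y_{g})\bigr)=0$ for $i\le k$. A homotopy section of $h$ is exactly a lift of $\mathrm{id}_{\CP^{m}}$ against $h$; building such a lift over the skeleta of $\CP^{m}$, the obstruction to passing the $i$-cells lies in $H^{i}\bigl(\CP^{m};\pi_{i}(C(f),U_{k}(Y_{g}))\bigr)$. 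All these groups vanish: for $i\le k$ the coefficient group is zero by the connectivity just noted, while for $i>2m$ the space $\CP^{m}$ has no cells; and these two ranges exhaust all $i\ge 1$, since $2m\le k$, the inequality $m\le[k/2]$ from Lemma~\ref{le:cofibration} being exactly what is needed. Hence a global section $s$ exists. (Equivalently, replacing $h$ by a fibration one checks that its homotopy fibre is $(k-1)$-connected, so the obstruction groups $H^{i}\bigl(\CP^{m};\pi_{i-1}(\mathrm{hofib}\,h)\bigr)$ have zero coefficients in all dimensions $i\le 2m$ where $\CP^{m}$ has cells.)

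Once $s$ is in hand, I would assemble the map
\[
f\vee s\colon\Bigl(\vee_{i\in I}S^{k}\Bigr)\vee\CP^{m}\longrightarrow U_{k}(Y_{g})
\]
and check it is a homotopy equivalence. On integral homology, the long exact sequence of the cofibre sequence degenerates (the connecting maps vanish because $2m\le k$) into short exact sequences $0\to H_{*}(\vee_{i\in I}S^{k})\xrightarrow{f_{*}}H_{*}(U_{k}(Y_{g}))\xrightarrow{h_{*}}H_{*}(\CP^{m})\to 0$; the section $s_{*}$ of $h_{*}$ splits these, making $f\vee s$ an isomorphism on $H_{*}(-;\ZZ)$ in every degree. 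All the spaces are simply connected — $U_{k}(Y_{g})$ is a universal cover, and $S^{k},\CP^{m}$ are simply connected since $k\ge 2$ — hence so is the wedge, and Whitehead's theorem upgrades the homology isomorphism to a homotopy equivalence. Combining with $\widetilde{\SymmPower k\Sigma}\simeq U_{k}(Y_{g})$ gives the stated chain of equivalences.

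The hard part is thus the existence of the section; the concluding Whitehead argument is formal. All the substance of that step is the vanishing of the obstruction groups, which rests on the numerical coincidence $\dim_{\RR}\CP^{m}=2m\le k$ — a by-product of the careful choice, in Lemma~\ref{le:cofibration}, of $f$ (from a free subgroup of $H_{k}(U_{k}(X_{2g}))$) and of $m$. One should also keep an eye on the degenerate ranges where $m\le 0$: there $\CP^{m}$ collapses to a point and the statement just asserts that $U_{k}(Y_{g})$ is a wedge of $k$-spheres, in agreement with Theorems~\ref{th:HomotopyTypeX} and~\ref{th:main}.
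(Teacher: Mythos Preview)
Your proof is correct and follows essentially the same route as the paper: use the cofibre sequence of Lemma~\ref{le:cofibration}, produce a section $s$ of $h$ by obstruction theory, and conclude that $f\vee s$ is a homotopy equivalence. You supply more detail than the paper does on the two points where it is terse --- namely, \emph{why} the obstruction groups $H^{i}\bigl(\CP^{m};\pi_{i}(C(f),U_{k}(Y_{g}))\bigr)$ vanish (via the $k$-connectedness of the pair together with the inequality $2m\le k$), and \emph{why} $f\vee s$ is a homotopy equivalence (homology isomorphism plus Whitehead) --- but the argument is the same.
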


\section{Rigidity and holomorphic bisectional curvature}
\label{sec:Rigidity}

After the topological digression of the three preceding sections,  we are equipped to give a proof of Theorem~\ref{THMholbiscur}. We start by recalling the following rigidity result of Mok~\cite{MokNHBC, MokMRT}:

\begin{theorem} \label{THMrigidity}
Suppose that $\tilde{M}$ is the universal cover of a compact K\"ahler manifold $M$ with nonnegative holomorphic bisectional curvature. Then, for suitable
$m,m_1,\ldots,m_{r}, s \in \mathbb{N}_0$, there is a  biholomorphic isometry
of $\tilde M$, equipped with the K\"ahler metric pulled back from $M$, to the
product of K\"ahler manifolds,
\begin{equation}\label{rigfactor}
(\mathbb{C}^m, e) \times \prod_{i=1}^{r} (\mathbb{CP}^{m_i},\vartheta_i) \times \prod_{j=1}^{s} (H_j,h_j) 
\end{equation}
where $e$ is the Euclidean metric,  $\vartheta_i$ are K\"ahler metrics of nonnegative holomorphic bisectional curvature, and $h_j$ are canonical metrics on compact  irreducible Hermitian symmetric spaces $H_j$ of rank $>1$. 
\end{theorem}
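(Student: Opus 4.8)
The plan is to follow Mok's approach~\cite{MokNHBC, MokMRT}, which assembles this uniformization from several deep ingredients; I will only describe the skeleton and indicate where the work lies. The starting observation is that nonnegative holomorphic bisectional curvature implies nonnegative Ricci curvature, as recalled in the introduction. Thus $\tilde M$, carrying the pulled-back K\"ahler metric, is a complete K\"ahler manifold of nonnegative Ricci curvature which covers the compact manifold $M$, and the Cheeger--Gromoll splitting theorem applies: $\tilde M$ is isometric to a Riemannian product $\RR^{q}\times C$ with $C$ compact and simply connected. Since the flat de Rham factor of a K\"ahler manifold is a complex subspace, $q=2m$ is even and the Euclidean factor is $(\CC^{m},e)$, while $C$ is a compact simply connected K\"ahler manifold which still has nonnegative holomorphic bisectional curvature. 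Applying the de Rham decomposition theorem once more presents $C$ as a product of irreducible compact simply connected K\"ahler manifolds; in a K\"ahler product the mixed bisectional curvatures vanish and the curvature restricted to a single factor is that of the factor, so each of these irreducible factors again has nonnegative holomorphic bisectional curvature. The statement is thereby reduced to the classification of a single \emph{irreducible} compact simply connected K\"ahler manifold $N$ with $B\geq 0$.

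For such an $N$ I would run the normalized K\"ahler--Ricci flow $g(t)$ starting from the given metric. Short-time existence is standard, and the essential point --- due to Bando in low dimensions and to Mok in general --- is a tensor maximum principle showing that the convex cone of curvature operators with nonnegative holomorphic bisectional curvature is preserved along the flow. A K\"ahler refinement of Hamilton's strong maximum principle then applies to the evolved curvature and yields the basic dichotomy: for $t>0$, either $(N,g(t))$ has strictly positive holomorphic bisectional curvature, or its curvature acquires a parallel degeneracy which reduces the restricted holonomy group of $(N,g(t))$ to a proper subgroup of the unitary group.

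In the first case the Frankel conjecture --- proved analytically by Siu and Yau, and algebraically by Mori --- says that a compact K\"ahler manifold of positive holomorphic bisectional curvature is biholomorphic to a projective space $\CP^{m_i}$; note that the theorem only claims this at the level of complex manifolds, the metric $\vartheta_{i}$ of the factor being whatever it inherited from $M$. In the second case, Berger's classification of the possible restricted holonomy groups of an irreducible K\"ahler manifold leaves only a finite list: the two Ricci-flat holonomies are excluded, since Ricci-flatness together with $B\geq 0$ forces every holomorphic bisectional curvature, and hence (by polarization) the whole K\"ahler curvature tensor, to vanish, so that $N$ would be a point; and every remaining proper holonomy group is the holonomy of an irreducible compact Hermitian symmetric space, which therefore coincides with $(N,g(t))$ itself, this manifold being complete and simply connected. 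Its rank exceeds $1$ precisely because the rank-one compact irreducible Hermitian symmetric spaces are the projective spaces, which have full holonomy and so belong to the first case; and its metric is forced to be a constant multiple of the canonical symmetric metric $h_{j}$. Reassembling the de Rham factors gives the decomposition~(\ref{rigfactor}).

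The technical heart, and the step I expect to be the main obstacle, is the passage for an irreducible factor from ``$B\geq 0$ but not everywhere positive'' to ``Hermitian symmetric of rank $>1$'': it depends on the preservation of the bisectional curvature cone under the K\"ahler--Ricci flow, on the strong maximum principle producing the parallel degeneracy and thus the holonomy reduction, and on the ensuing holonomy analysis. Smaller points that still need attention are the compactness of the factor $C$ in the Cheeger--Gromoll splitting, and the explicit check that the curvature cone in question is preserved by the system of ordinary differential equations accompanying the heat-type evolution of the curvature operator. In a paper of the present scope one simply invokes~\cite{MokMRT} for this core, using Theorem~\ref{THMrigidity} as a black box, as is done here.
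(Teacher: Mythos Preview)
The paper does not prove Theorem~\ref{THMrigidity} at all: it is stated as a recalled result of Mok, with the citations~\cite{MokNHBC,MokMRT} serving as the entire justification, and is then used as a black box in the proof of Proposition~\ref{PROPfghomology} and Theorem~\ref{THMholbiscur}. Your final paragraph already recognises this. There is therefore no ``paper's own proof'' to compare against; what you have written is a serviceable sketch of Mok's argument, which is more than the paper itself supplies.

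A couple of small cautions about your sketch, should you ever want to flesh it out. First, the passage from the flowed metric back to the original one on the Hermitian symmetric factors $H_j$ is not automatic: Mok's theorem asserts that the \emph{original} metric on each such factor is (a multiple of) the canonical one, and this rigidity statement requires its own argument beyond identifying the holonomy of $g(t)$ for $t>0$. Second, the appeal to Berger's list is not quite how Mok proceeds; the parallel null-structure of the curvature operator produced by the strong maximum principle gives more directly a symmetric-space structure, without routing through the full holonomy classification. Neither point affects the correctness of the outline, but both are places where a reader might ask for more.
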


Recall that a Hermitian symmetric space is a Riemannian symmetric space $G/H$ for a Lie
group $G$, equipped with a parallel complex structure with respect to which the metric is Hermitian; so the complex structure is preserved by $H$ and there is an underlying K\"ahler
structure. 

The following is an easy consequence of the classification:
\begin{proposition} \label{PROPfghomology}
Let $M$ be a compact K\"ahler manifold with nonnegative holomorphic bisectional curvature.
Then the total singular homology $H_{*}(\tilde{M}; \mathbb{Z})$ of its
universal cover is finitely generated (as an  group).
\begin{proof}
According to the decomposition of Theorem \ref{THMrigidity},
the universal cover $\tilde M$ is homotopy equivalent to a product 
\begin{equation}\label{homotopyfactor}
\prod_{i=1}^{r} \mathbb{CP}^{m_i} \times \prod_{j=1}^{s} H_j 
\end{equation}
All factors are compact manifolds, so their product is also a compact manifold.
But the homology of a compact manifold is finitely generated in every
dimension. (see e.g.\ \cite{Hatcher}, Appendix, Corollary A9 and Corollary A8).

\end{proof}
\end{proposition}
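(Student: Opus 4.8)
The plan is to combine Mok's rigidity theorem (Theorem~\ref{THMrigidity}) with elementary facts about the homology of compact manifolds. First I would invoke Theorem~\ref{THMrigidity} to write the universal cover $\tilde M$ as a biholomorphic isometry to a product $(\CC^m,e)\times\prod_{i=1}^r(\CP^{m_i},\vartheta_i)\times\prod_{j=1}^s(H_j,h_j)$. Since we only care about homology (an invariant of homotopy type), I would immediately discard the Euclidean factor $\CC^m$, which is contractible, and observe that $\tilde M$ is homotopy equivalent to the compact product $\prod_{i=1}^r\CP^{m_i}\times\prod_{j=1}^s H_j$ appearing in~(\ref{homotopyfactor}): each $\CP^{m_i}$ is compact, and each irreducible Hermitian symmetric space $H_j$ of compact type is a compact homogeneous manifold $G/H$.

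Next I would note that a finite product of compact manifolds is again a compact manifold. The final ingredient is the standard fact that the singular homology of a compact manifold is finitely generated in every degree, and since a manifold has finite dimension, the total homology $H_*(\tilde M;\ZZ)=\bigoplus_i H_i(\tilde M;\ZZ)$ is then finitely generated as an abelian group. I would cite~\cite{Hatcher}, Appendix (Corollaries A8 and A9) for this. That completes the argument.

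There is essentially no obstacle here beyond correctly setting up the application of Mok's theorem — the only subtlety is remembering to strip off the non-compact flat factor before appealing to compactness, and to note that homotopy equivalences suffice since homology is a homotopy invariant. The real work of the paper lies elsewhere (namely in Corollary~\ref{CORnotfg}, which shows that the symmetric powers in question fail the conclusion of this proposition, thereby forcing Theorem~\ref{THMholbiscur}); the present statement is just the bridge between the differential-geometric rigidity input and the purely topological computation.
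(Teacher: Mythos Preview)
Your proposal is correct and follows exactly the same approach as the paper's proof: invoke Mok's decomposition (Theorem~\ref{THMrigidity}), drop the contractible $\CC^m$ factor to obtain a homotopy equivalence with the compact product~(\ref{homotopyfactor}), and then appeal to the finite generation of the homology of compact manifolds via \cite{Hatcher}, Appendix, Corollaries~A8--A9. Your write-up is, if anything, slightly more explicit about the homotopy-invariance step than the paper's version.
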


\begin{proof}[Proof of Theorem~\ref{THMholbiscur}]
We fix positive integers $g$ and $k$  such that  $1<k<2g-1$, and consider the case of
the moduli space of vortices $M={\rm Sym}^k (\Sigma_g)$.  Combining Proposition~\ref{PROPfghomology} with Corollary~\ref{CORnotfg}, Theorem~\ref{THMholbiscur} follows
immediately in the situation where $k>1$. 
But as we saw in the end of the Introduction, the case $k=1$ is trivial.
\end{proof}

\section{Vortices on hyperelliptic surfaces} \label{hyperel}

In this final section, we let $\Sigma$ be a hyperelliptic Riemann surface of genus $g \geq 2$ --- for example, any smooth complex curve of genus $g= 2$. Recall that
${\rm Sym}^{k}(\Sigma)$ is a manifold such that, for $k\geq 2$, 
$\pi_{1}({\rm Sym}^{k}(\Sigma))\cong \ZZ^{2g}$ (see Lemma~\ref{THMpi1Sym}).
In \cite{KallelSalvatore}, the second homotopy group of this space is 
discussed. In particular, Lemma 9 of this paper says that
\begin{equation} \label{pi2Sym2}
\pi_{2}({\rm Sym}^{2}(\Sigma))=\ZZ[t_{i},t_{i}^{-1}],\quad 1\leq i\leq 4. 
\end{equation}
In the present work, we have generalised this formula to 
higher homotopy groups in
Theorem~\ref{th:HomotopyTypeX}.

The hyperelliptic condition can be characterised by saying that $\Sigma$ admits a
holomorphic involution
\[
\sigma: \Sigma \longrightarrow \Sigma
\]
such that $p: \Sigma\to \Sigma/\{p\sim \sigma(p)\}$ is a branched covering over 
$S^{2}\cong \Sigma/\{p\sim \sigma(p)\}$; whenever such map $\sigma$ exists, it is
unique.  Given $\sigma$ or $p$, there is an obvious holomorphic map
\begin{equation} \label{alpha}
\alpha: S^{2}\longrightarrow {\rm Sym}^{2}(\Sigma)
\end{equation}
which sends a point of the base $S^2$ of $p$ to the two (possibly equal) points 
on $\Sigma$ covering it.
The domain of this map can be interpreted as a projective line parametrising all effective divisors in
a $g^{1}_{2}$ (i.e.\ pencil of degree two, cf.~\cite{ArbCorGriHar}) associated to the hyperelliptic structure.
In Lemma 5 of
reference~\cite{Kallel}, the image $h_{*}([\alpha])$ in $H_2({\rm Sym}^{2}(\Sigma);\ZZ)$ 
of the homotopy class represented by (\ref{alpha}) under the Hurewicz
map is computed. It is further found  that this image is a non-torsion element of the second homology
group $H_2({\rm Sym}^{2}(\Sigma);\ZZ)$, so $[\alpha]$ and all its integral multiples define nontrivial elements of 
the second homotopy group
$\pi_{2}({\rm Sym}^{2}(\Sigma))$.

Notice that the hyperelliptic involution $\sigma$ induces another involution map 
\[
\sigma^{(2)}:{\rm Sym}^2(\Sigma) \longrightarrow {\rm Sym}^2(\Sigma),
\]
defined on degree two divisors on $\Sigma$ by $x+y\mapsto \sigma(x) + \sigma(y)$.
Clearly, the set of fixed points of this map $\sigma^{(2)}$ is the image of the map $\alpha$ in (\ref{alpha}).
We have the following result:

\begin{proposition}
The image of the map $\alpha$ is a noncontractible holomorphic 2-sphere
in the moduli space of two vortices on $\Sigma$. If  the map $\sigma^{(2)}$ is an isometry
of the vortex metric on ${\rm Sym}^2(\Sigma)$, then this 2-sphere is totally geodesic. 
\end{proposition}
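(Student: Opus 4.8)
The plan is to prove the two assertions separately, the first being essentially topological and the second a standard application of the theory of fixed-point sets of isometries. For the first claim, the image $\alpha(S^2)$ is a holomorphic submanifold because $\alpha$ is a holomorphic map; that it is embedded (rather than merely immersed) follows from the fact that $\alpha$ identifies $S^2\cong\Sigma/\{p\sim\sigma(p)\}$ with the locus of divisors of the form $x+\sigma(x)$, and distinct such divisors are distinct points of ${\rm Sym}^2(\Sigma)$. To see that this 2-sphere is noncontractible, I would invoke the computation cited from Lemma 5 of \cite{Kallel}: the Hurewicz image $h_*([\alpha])\in H_2({\rm Sym}^2(\Sigma);\ZZ)$ is a non-torsion class, so $[\alpha]\ne 0$ in $\pi_2$ and hence $\alpha$ is not null-homotopic; a fortiori its image is not contractible in ${\rm Sym}^2(\Sigma)$.

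For the second claim, assume $\sigma^{(2)}$ is an isometry of the vortex metric. The key structural fact is that the fixed-point set of an isometry of a Riemannian manifold is a totally geodesic submanifold: each connected component is the image of the exponential map on the $(+1)$-eigenspace of the derivative $d\sigma^{(2)}$ at a fixed point, and geodesics tangent to the fixed set at a fixed point are carried to themselves by the isometry, hence stay in the fixed set. I would cite this standard result (e.g.\ Kobayashi--Nomizu, Kobayashi's book on transformation groups, or do Carmo). Since the image of $\alpha$ is precisely ${\rm Fix}(\sigma^{(2)})$, as observed just before the statement, it follows immediately that $\alpha(S^2)$ is totally geodesic.

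The main obstacle — or rather the main point requiring care — is verifying that $\sigma^{(2)}$ genuinely acts by isometries of the $L^2$ vortex metric; but this is hypothesised in the statement, so under that hypothesis nothing further is needed. (One could remark that $\sigma^{(2)}$ is always a biholomorphism of ${\rm Sym}^2(\Sigma)$, being induced by the biholomorphic involution $\sigma$ of $\Sigma$; whether it is an isometry depends on whether the chosen K\"ahler metric on $\Sigma$ — equivalently the background data defining the vortex equations — is $\sigma$-invariant, which holds for instance for the hyperbolic metric on $\Sigma$ and the symmetric choice of Hermitian structure on $\mathcal{L}$.) A secondary point is to make sure the embeddedness of $\alpha(S^2)$ is stated cleanly so that ``totally geodesic submanifold'' is unambiguous; this is routine given the divisor description. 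I would present the argument in the order: (i) $\alpha$ is holomorphic and its image is an embedded 2-sphere; (ii) $h_*([\alpha])$ non-torsion $\Rightarrow$ noncontractible; (iii) ${\rm im}(\alpha)={\rm Fix}(\sigma^{(2)})$; (iv) fixed sets of isometries are totally geodesic $\Rightarrow$ conclusion.
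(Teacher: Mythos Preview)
Your proposal is correct and follows essentially the same approach as the paper: both arguments deduce noncontractibility from the fact that $h_*([\alpha])$ is non-torsion (as cited from \cite{Kallel}), note that holomorphicity of $\alpha$ makes the image a complex submanifold, and then invoke the standard fact that the fixed-point set of an isometry is totally geodesic. The only cosmetic difference is that the paper references the Picard--Lindel\"of theorem (via \cite{RomDLC}) rather than Kobayashi--Nomizu for this last fact, and your write-up is somewhat more explicit about embeddedness and the logical step from ``$[\alpha]\neq 0$'' to ``image noncontractible''.
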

\begin{proof}
The argument we have just given above already impies that $\alpha(S^2)$ is
not contractible in ${\rm Sym}^{2}(\Sigma)$. Since $\alpha$ is holomorphic, it embeds
the 2-sphere as a complex submanifold. It is a general result in Riemannian geometry that, if the the set of fixed points of an isometry form a submanifold, it is necessarily totally
geodesic --- this follows easily from the Picard--Lindel\"of theorem applied to geodesic
flow (see Appendix A in~\cite{RomDLC}).
\end{proof}

The induced $L^2$-metric gives a shape to this 2-sphere,
whose geometry would be interesting to understand. 
In~\cite{ManRom}, it is shown that, as $\tau \rightarrow \frac{4 \pi k}{{\rm Vol}(\Sigma_g)}$
(the ``dissolving limit''), the metric degenerates: null-vectors appear precisely along the tangent directions to
this 2-sphere as the limit is attained.
A number of natural questions
can be asked, for example: which metrics on $\Sigma$ will ensure that this sphere
is round, or that it admits nontrivial isometries?

\vspace{18pt}

\noindent
{\sc Acknowledgements:} NMR would like to thank Jo\~ao Baptista for discussions in the early stages of this project, Indranil Biswas for clarifying some questions, and Martin Speight for a helpful comment. This work was partially supported by CTQM, University of Aarhus, and by the European Commission in the framework of the Marie Curie project MTKD-CT-2006-042360.

\bibliographystyle{numsty}

\begin{thebibliography}{9999}
                        \setlength{\itemsep}{0 pt}
                          \setlength{\parskip}{0pt}
                          \setlength{\parsep}{0pt}

\begin{small}

\bibitem{ArbCorGriHar}
\textsc{E. Arbarello, M. Cornalba, P.A. Griffiths {\upshape and} J. Harris}: 
{\it Geometry of Algebraic Curves}, vol. 1. Springer-Verlag, 1985

\bibitem{BapL2M}
\textsc{J.M. Baptista}:
On the $L^2$-metric of vortex moduli spaces.  
\newblock \textsl{Nucl. Phys. B} \textbf{844} (2011) 308--333 
         

\bibitem{BerDasWen}
\textsc{A. Bertram, G. Daskalopoulos {\upshape and} R. Wentworth}:
Gromov invariants for holomorphic maps from Riemann surfaces to Grassmannians,
\newblock \textsl{J. Amer. Math. Soc.} \textbf{9} (1996) 529--571         
 
\bibitem{BerThaQCSP}
\textsc{A. Bertram {\upshape and} M. Thaddeus}:
On the quantum cohomology of a symmetric product of an algebraic curve.
\newblock \textsl{Duke Math. J.} \textbf{108} (2001) 329--362                       
              
              
\bibitem{Bisnew}
\textsc{I. Biswas}:
On K\"ahler structures over symmetric products of a Riemann surface.
\textsl{Proc. Amer. Math. Soc.} \textbf{141} (2013)
1487--1492ÊÊ
         
\bibitem{BisCurv}
\textsc{I. Biswas}:
On the curvature of symmetric products of a compact Riemann surface.
\newblock \textsl{Arch. Math. (Basel)} \textbf{100} (2013) 413--415
                  
   
\bibitem{BisSchDBQM}
\textsc{I. Biswas {\upshape and} G. Schumacher}:
Determinant bundle, Quillen metric, and Petersson--Weil form on moduli spaces.
\newblock \textsl{Geom. Funct. Anal. Anal.} \textbf{9} (1999) 226--255                       
                  

\bibitem{BisSchGMSHB}
\textsc{I. Biswas {\upshape and} G. Schumacher}:
Geometry of moduli spaces of Higgs bundles.
\newblock \textsl{Commun. Anal. Geom.} \textbf{14} (2006) 765--793                       
              
   
\bibitem{BisSchKSMSPB}
\textsc{I. Biswas {\upshape and} G. Schumacher}:
K\"ahler structure on moduli spaces of principal bundles.
\newblock \textsl{Diff. Geom. Appl.} \textbf{25} (2007) 136--146                       
                                 

\bibitem{BisSchCVE}
\textsc{I. Biswas {\upshape and} G. Schumacher}:
Coupled vortex equations and moduli: deformation theoretic approach and K\"ahler
geometry.
\newblock \textsl{Math. Ann.} \textbf{343} (2009) 825--851                       
                      
                      
\bibitem{BraVHLB}
\textsc{S.B. Bradlow}: Vortices in holomorphic line bundles over closed
K\"ahler manifolds. 
\newblock \textsl{Commun.\ Math.\ Phys.} \textbf{135} (1990) 1--17


\bibitem{BradGarcPra}
\textsc{S. B. Bradlow {\upshape and} O. Garc{\'i}a-Prada }
Stable triples, equivariant bundles and dimensional reduction.
\newblock \textsl{Math. Ann.} \textbf{304} (1996) 225--252


\bibitem{CieGaiMunSal}
\textsc{K. Cieliebak, R.A. Gaio, I. Mundet i Riera {\upshape and} D.A. Salamon}: The symplectic vortex 
equations  and invariants of Hamiltonian group actions.
\newblock \textsl{J. Sympl. Geom.} \textbf{1} (2001) 543--645


\bibitem{ColTonDCSV}
\textsc{ B. Collie {\upshape and} D. Tong}: The dynamics of Chern--Simons vortices. 
  \newblock \textsl{Phys. Rev. D} \textbf{78} (2008) 065013

\bibitem{DonCMT}
\textsc{S.K. Donaldson}: 
Topological field theories and formulae of Casson and Meng--Taubes.
\newblock \textsl{Geom. Topol. Monogr.} \textbf{2} (1999) 87--102


\bibitem{EasRomCFHM}
\textsc{M.G. Eastwood {\upshape and} N.M. Rom\~ao}: A combinatorial formula for
  homogeneous moments. 
  \newblock \textsl{Math. Proc. Camb. Phil. Soc.} \textbf{142} (2007) 153--160


\bibitem{Fan}
\textsc{F. Fang}: 
K\"ahler manifolds with almost non-negative bisectional curvature.
\newblock \textsl{Asian J. Math.} \textbf{6} (2002) 385--398


\bibitem{FulAT}
\textsc{W. Fulton}: 
{\it Algebraic Topology: A First Course}. Springer-Verlag, 1997


\bibitem{GarVRS}
\textsc{O. Garc\'\i a-Prada}: A direct existence proof for the vortex
equations over a compact Riemann surface.
\newblock \textsl{Bull. London. Math. Soc.} \textbf{26} (1992) 88--96


\bibitem{GarcPraDRSB}
\textsc{O. Garc{\'i}a-Prada }
Dimensional reduction of stable bundles, vortices and stable pairs.
\newblock \textsl{Internat. J. Math.} \textbf{5} (1994) 1--52


\bibitem{GelMan}
\textsc{S.I.\ Gel'fand {\upshape and} Yu. Manin}: 
{\it Methods of Homological Algebra}. Springer-Verlag, 2003

\bibitem{GolKobHBC}
\textsc{S.I. Goldberg {\upshape and} S. Kobayashi}:
Holomorphic bisectional curvature.
\newblock \textsl{J. Diff. Geom.} \textbf{1} (1967) 225--233                       


\bibitem{GriHar}
\textsc{P. Griffiths {\upshape and} J. Harris}: 
{\it Principles of Algebraic Geometry}. John Wiley \& Sons, 1978


\bibitem{Hatcher}
\textsc{A. Hatcher}: 
{\it Algebraic Topology}. Cambridge University Press, 2002


\bibitem{HitSDERS}
\textsc{N.J. Hitchin}:
The self-duality equations on a Riemann surface.
\newblock \textsl{Proc. London Math. Soc.} \textbf{55} (1987) 59--126   


\bibitem{HitL2}
\textsc{N.J. Hitchin}:
$L^2$-cohomology of hyperk\"ahler quotients.
\newblock \textsl{Commun. Math. Phys.} \textbf{211} (2000) 153--165   



\bibitem{ItoGASDC}
\textsc{M. Itoh}:
Geometry of anti-self-dual connections and Kuranishi map.
\newblock \textsl{J. Math. Soc. Japan} \textbf{40} (1988) 9--33   

\bibitem{JafTauVM}
\textsc{A. Jaffe {\upshape and} C. Taubes}: 
{\it Vortices and Monopoles}. Birkh\"auser, 1980

\bibitem{JosPenGMSSVB}
\textsc{J. Jost {\upshape and} X.-W. Peng}:
The geometry of moduli spaces of stable vector bundles over Riemann
surfaces. In: \textsc{D. Ferus, U. Simon {\upshape and} U. Pinkall} (Eds.):
\newblock {\it Global Differential Geometry and Global Analysis (Berlin, 1990)},
Lecture Notes in Mathematics 1481, Springer-Verlag, 1991; pp. 79--96  


\bibitem{Kallel}
\textsc{S. Kallel}:
Some remarks on symmetric products of curves.
{\tt arXiv:math/0402267v2}   



\bibitem{KallelSalvatore}
\textsc{S. Kallel {\upshape and} P. Salvatore}: Symmetric products of   
two dimensional complexes. 
In: \textsc{A. \'Adem, J. Gonz\'alez  {\upshape and} G. Pastor} (Eds.): {\it Recent Developments in Algebraic Topology (San Miguel de Allende 2003)},
\newblock \textsl{Contemp. Math.} \textbf{407} (2006) 147--161


\bibitem{Lue}
\textsc{W. L\"uck}: 
{\it $L^2$-invariants: Theory and Applications to Geometry and K-Theory}. Springer-Verlag, 2002



\bibitem{MacSPAC}
\textsc{I. G. Macdonald}: Symmetric products of an algebraic curve.
\newblock \textsl{Topology} \textbf{1} (1962) 319--343


\bibitem{MacLane}
\textsc{S. MacLane}: 
{\it Homology}. Springer-Verlag, 1975

\bibitem{ManFOVD}
\textsc{N.S. Manton}: First-order vortex dynamics.
\newblock \textsl{Ann. Phys.} \textbf{256} (1997) 114--131

\bibitem{ManNasVVMS}
\textsc{N.S. Manton {\upshape and} S.M. Nasir}: Volume of vortex  
 moduli spaces.
\newblock \textsl{Commun. Math. Phys.} \textbf{199} (1999) 591--604

\bibitem{ManRom}
\textsc{N.S. Manton {\upshape and} N.M. Rom\~ao}: Vortices and Jacobian varieties.
\newblock \textsl{J. Geom. Phys.} \textbf{61} (2011) 1135--1155


\bibitem{ManSpeAIV}
\textsc{N.S. Manton {\upshape and} J.M. Speight}: Asymptotic interactions
of critically coupled vortices.
\newblock \textsl{Commun. Math. Phys.} \textbf{236} (2003) 535--555



\bibitem{ManSut}
\textsc{N. Manton {\upshape and} P. Sutcliffe}: 
{\it Topological Solitons}. Cambridge University Press, 2004


\bibitem{Mat}
\textsc{A.\ Mattuck}: On symmetric products of curves.
\newblock \textsl{Proc. Amer. Math. Soc.} \textbf{13} (1962) 82--87


\bibitem{MokNHBC}
\textsc{N. Mok}: The uniformization theorem for compact K\"ahler manifolds
of nonnegative holomorphic curvature.
\newblock \textsl{J. Diff. Geom.} \textbf{27} (1988) 179--214

\bibitem{MokMRT}
\textsc{N. Mok}: {\it Metric Rigidity Theorems on Hermitian Locally Symmetric
Manifolds}. World Scientific, 1989

\bibitem{MosShiQMMS}
\textsc{I. Moss {\upshape and} N. Shiiki}: Quantum mechanics on moduli spaces.
\newblock \textsl{Nucl. Phys. B} \textbf{565} (2000) 345--362


\bibitem{RomDLC}
\textsc{N.M. Rom\~ao}: Dynamics of $\mathbb{CP}^1$ lumps on a cylinder.
\newblock \textsl{J. Geom. Phys.} \textbf{54} (2005) 42--76

\bibitem{RomGVB}
\textsc{N.M. Rom\~ao}: Gauged vortices in a background.
\newblock \textsl{J. Phys. A: Math. Gen.} \textbf{38} (2005) 9127--9144

\bibitem{SamVS}
\textsc{T.M. Samols}: Vortex scattering.
\newblock \textsl{Commun. Math. Phys.} \textbf{145} (1992) 149--180

\bibitem{Seg}
\textsc{G. Segal}: Topology of the space of ${\rm SU}(2)$-monopoles in $\mathbb{R}^3$. 
In: \textsc{J.E.~Andersen, J.~Dupont, H.~Pedersen   {\upshape and} A.~Swann} (Eds.): {\it Geometry and Physics (Aarhus 1995)},
\newblock Lecture Notes in Pure and Applied Mathematics 184, {Marcel Dekker}, 1997, pp.~141--147


\bibitem{SegSel}
\textsc{G. Segal {\upshape and} A. Selby}: The cohomology of the space of magnetic monopoles.
\newblock \textsl{Commun. Math. Phys.} \textbf{177} (1996) 775--787

\bibitem{Sen}
\textsc{A. Sen}: Dyon-monopole bound states, self-dual harmonic forms on the multi-monopole
moduli space, and $\rm SL(2,\mathbb{Z})$ invariance in string theory.
\newblock \textsl{Phys. Lett. B} \textbf{329} (1994) 217--221


\bibitem{SetSteZas}
\textsc{S. Sethi, M. Stern {\upshape and}  E. Zaslow}: Monopole and dyon bound states in $N=2$ supersymmetric Yang--Mills theories.
\newblock \textsl{Nucl. Phys. B} \textbf{329} (1994) 217--221


\bibitem{StrLVS}
\textsc{I.A.B. Strachan}: Low-velocity scattering of vortices in a modified Abelian Higgs model.
\newblock \textsl{J. Math. Phys.} \textbf{33} (1992) 102--110


\bibitem{StuAHM}
\textsc{D.M.A. Stuart}: Dynamics of abelian Higgs vortices in the near
Bogomolny regime.
\newblock \textsl{Commun. Math. Phys.} \textbf{159} (1994) 51--91

\bibitem{TakZogGMSVB}
\textsc{L.A. Takhtajan {\upshape and} P.G. Zograf}: On the 
geometry of moduli spaces of vector bundles over a Riemann surface.
\newblock \textsl{Math. USSR-Izv.} \textbf{35} (1990) 83--100



\end{small}
\end{thebibliography}

\end{document}